\newtheorem{theorem}{Theorem}[section]
\newtheorem{lemma}[theorem]{Lemma}
\newtheorem{corollary}[theorem]{Corollary}
\theoremstyle{definition}
\newtheorem{definition}[theorem]{Definition}
\newtheorem*{remark*}{Remark}
\newcommand{\E}{\mathbb E}
\newcommand{\R}{\mathbb R}
\newcommand{\N}{\mathbb N}
\newcommand{\dd}{\, \mathrm{d}}
\newcommand{\norm}[1]{\lVert#1\rVert}
\numberwithin{equation}{section}
\title{Microscopic derivation of Vlasov equation with compactly supported pair potentials }
\author{Manuela Feistl, Peter Pickl}
\begin{document}
\maketitle

\begin{abstract}
We present a probabilistic proof of the mean-field limit and propagation of chaos of a N-particle system in three dimensions with compactly supported pair potentials of the form $N^{3\beta-1} \phi(N^{\beta}x)$ for $\beta\in\left[0,\frac{1}{7}\right)$ and $\phi\in L^{\infty}(\mathbb{R}^3)\cap L^1(\mathbb{R}^3)$. In particular, for typical initial data, we show convergence of the Newtonian trajectories to the characteristics of the Vlasov-Dirac-Benney system with delta-like interactions. The proof is based on a Gronwall estimate for the maximal distance between the exact microscopic dynamics and the approximate mean-field dynamics. Thus our result leads to a derivation of the Vlasov-Dirac-Benney equation from the microscopic $N$-particle dynamics with a strong short range force.
\end{abstract}

\section{Introduction}
Many physical phenomena have to be explained by purely kinetic mechanisms. Recently, the semicollisional regimes have gained a great deal of interest. 
Solving the Newtonian equations of motion for numerous interacting particles becomes infeasible as the particle count increases.
To circumvent this problem but still obtain some kind of solution one can change the level of description by derivating an effective equation, which describes a continuous mass density which effectively still represents the same situation from a macroscopic point of view. Under certain assumptions on the initial density $k_0$ the characteristics of the effective equations provide typically a very good approximation of the $N$-particle trajectories in the limit as the particle number tends to infinity if their initial positions are independent and identically distributed with respect to the density $k_0$. 

\subsection{Previous results}
Numerous models have been proposed in the literature which reproduce the kinetic effects in effective equations describing gases or fluids. One example of such an effective equation goes back to Vlasov \cite{VLASOV}, which has been derived with mathematical rigour by Neunzert and Wick in 1974 \cite{NeunzertWick}. Classical results of this kind are valid for Lipschitz-continuous forces \cite{BraunHepp,Dobrushin}.
One difficulty is handling clustering of particles for singular interactions (see \cite{SpohnBook}) like Coulomb or Newtons gravitational force.
Hauray and Jabin could include singular interaction forces scaling like $1/\vert q \vert^{\lambda}$ in three dimensions with ${\lambda} < 1$ \cite{Hauray2007} and later the physically more interesting case with ${\lambda}$ smaller but close to $2$ with a lower bound on the cut-off at $q = N^{-1/6}$ \cite{Hauray2013} but still in the deterministic setting. Therefore they had to choose quite specific initial conditions, according to the respective $N$-particle law.
The last deterministic result we like to mention  in the Coulomb interaction setting is \cite{Kiessling}, which is valid for repulsive pair-interactions and assumes no cut-off, but instead a bound on the maximal forces of the microscopic system.
One major difference to our work is that the results rely on deterministic initial conditions, even if some of them are formulated probabilistically.
In contrast to the previous papers Boers and Pickl \cite{Peter} derive the Vlasov equations for stochastic initial conditions with interaction forces scaling like $|x|^{-3\lambda+1}$ with $(5/6<\lambda<1)$. They were able to obtain a cut-off  as small as $ N^{-\frac{1}{3}}$, which is the typical inter particle distance. Furthermore, Lazarovici and Pickl \cite{Dustin} extended the method in \cite{Peter} exploiting the second order nature of the dynamics and introducing an anisotropic scaling of the relevant metric to include the Coulomb singularity and they obtained a microscopic derivation of the Vlasov-Poisson equations with a cut-off of $N^{-\delta}$  $(0<\delta<\frac{1}{3})$.
More recently, the cut-off parameter was reduced to as small as $N^{-\frac{7}{18}}$ in \cite{grass} by using the second order nature of the dynamics and a having a closer look at the collisions which could occur.
There are some result from Ölschläger for potentials of the form $N^{3\beta-1} \phi(N^{\beta})$ in the classical setting like \cite{Oelschläger Brown}, including Brownian motion, or \cite{Oelschläger Hydro} which addresses the derivation of the continuity equation in the monokinetic setting.
To our knowledge there is no derivation in probability of Vlasov-Dirac-Benney equation for compactly supported pair potentials in the literature. 

\subsection{Present result}
The strategy presented in this paper uses stochastic initial conditions, as it is based on the technique proposed in \cite{Peter,Dustin,grass} but in contrast to these we use a highly singular force term instead of Coulomb interaction.
The usual methods for deriving effective descriptions for microscopic dynamics fail here because there exists no kind of Lipschitz condition for the force. The whole interaction has a Lipschitz constant that depends on $N$. In all mentioned papers above in most of the cases the potential is nice and smooth. The main part of these proofs considers cases were the force is Lipschitz-continuous and the cases where the force is not is negligible due to probabilistic arguments. In our case we have an interaction that can be felt by the leading order and has additionally a large derivative. So the force in this paper is not Lipschitz-continuous at all.
With regard to our goal to derive the Vlasov-Dirac-Benney equation from the microscopic Newtonian $N$-particle dynamics we compare the $N$-particle Newtonian flow with the effective flow given by the macroscopic equation in the limit $N\rightarrow\infty$ for a pair potentials of the form 
\begin{align*}
\phi_N^{\beta}=N^{3\beta-1} \phi(N^{\beta})
\end{align*}
with $\beta\in\left[0,\frac{1}{7}\right)$ and $\phi \in L^{\infty}(\mathbb{R}^3)\cap L^1(\mathbb{R}^3)$. Our system is between collision and mean-field behaviour and describes physical situations in which the interaction has a long range since the typical distance between particles is of order $N^{\frac{1}{3}}$. On the one hand, the interaction force is collision-like, so that interactions only rarely take place, i.e. one particle interacts only with a selection of particles of order $\gg 1$ but $\ll N$ and not with all particles. On the other hand, the behaviour can be described with the mean-field approach. The system couples strongly but is localized so an mean-field approach can be applied.
The effect of the instabilities will be much more drastic.
In the following we will prove that the measure of the set where the maximal distance of the Newtonian trajectory and the mean-field trajectory is large gets vanishingly small as $N$ increases.
\subsection{The microscopic model - Newtonian motion of particles and the Newtonian flow}
We consider a classical $N$-particle system subject to Newtonian dynamics interacting through a pair interaction force. 
Our system is distributed as a trajectory in phase space $\mathbb{R}^{6N}.$ We use the notation $X=(Q,P)=(q_1,\hdots,q_N,p_1,\hdots,p_N)$, where $(Q)_j= q_j\in \mathbb{R}^{3}$ denotes the one-particle position and $(P)_j=p_j \in \mathbb{R}^{3}$ stands for its momentum.
The Hamiltonian, the operator corresponding to the total energy of the system, is given by
\begin{align*}
 H_N(X)=\sum_{j=1}^N \frac{p_j^2}{2m}+\sum_{1\leq j<k\leq N} \phi^\beta(q_j-q_k),
\end{align*}
with $\phi \in L^{\infty}(\mathbb{R}^3)\cap L^1(\mathbb{R}^3)$ and $x_j=(q_j,p_j)\in\mathbb{R}^6.$ 
As long as the conditions on the solution of the effective equation are valid an external potential can be added to the Hamiltonian, but it does not affect the derivation of the equation as it has the same impact on all particles regardless of their distribution.
Since we will consider differences between exact Newtonian dynamics and mean-field dynamics we will omit the external potential without loss of generality.
Setting the mass $m=1$ leads us to the equations of motion, which determines the particle trajectories
\begin{align*}\label{eff.flow}
\begin{cases}
\dot{q_j}=\frac{\partial H}{\partial p_j}=p_j\\
\dot{p_j}=\frac{\partial H}{\partial q_j}= -\sum_{j=1}^{N} \nabla_{q_i} \phi_N^{\beta}(q_j-q_k)=-\sum_{j=1}^{N}\frac{1}{N}f_{N}^{\beta}(q_j-q_k).\\
\end{cases}
\end{align*}
The interaction potential $\phi_N^{\beta}$  in dimension three is defined by
\begin{align*}
\phi_N^{\beta}=N^{3\beta-1}\phi(N^{\beta}x), \hspace{1cm} \beta \in \lbrack 0, \frac{1}{7}),
\end{align*}
for some bounded spherical symmetric $\phi:\mathbb{R}^{3}\rightarrow \mathbb{R}$ with $\nabla \phi(0)=0$ and $f_N^{\beta}$ denotes the pair interaction force for the system. 
The factor $\beta\in\mathbb{R}$ determines the scaling behaviour of the interaction and depending on how one chooses $\beta$ one get another hydrodynamic equation. Usually  $\phi_N^\beta$
scales with the particle number such that the total
interaction energy scales in the same way as the total kinetic energy of the $N$ particles, so that the $L^1$-norm of  $\phi_N^\beta$ is proportional to $N^{-1}$ with
$\phi_N^\beta(x)= N^{-1+3\beta} \phi(N^\beta x)$ for $\phi\in L^{\infty}(\mathbb{R}^3)\cap L^1(\mathbb{R}^3)$. 
Note that we assume that $\phi$ and thus $f$ will be independent of the momentum.

The case $\beta=0$ was also studied in \cite{BraunHepp}. The strength of the interaction is of order $\frac{1}{N}$ and hence the equations of motion consider a weakly interaction gas.
For positive $\beta$ the support of the potential shrinks and the strength of the interaction increases with $N$ and $\lim_{N\rightarrow\infty}\phi_N=\delta_0$.
Thus the case $\beta=\frac{1}{3}$ describes in contrast a strong interaction process. The interaction strength is of order $1$ but two particles only interact when their distance is of order of the typical inter particle distance in $\mathbb{R}^3$.
As long as $\beta<1/3$ the mean-field approximation is from the heuristical point of view not surprising because the interaction potentials overlap as the typical particle distance has approximately the size of $N^{\frac{1}{3}}$.
This mean inter-particle distance is consequently smaller than the range of the interaction.
Hence, on average, every particle interacts with many other particles, and the interactions are weak since $N^{-1}N^{3\beta}\rightarrow 0$ as $N\rightarrow \infty$.
As long as the correlations are sufficiently mild, the law of large numbers gives that the interaction can be replaced by its expectation value, the so-called mean-field.

The potential gradient $\nabla_{q}\phi_N^{\beta}=f_N^{\beta}$ determines the pair interaction function, which has the form
$f_N^{\beta}=N^{4\beta}l(N^{\beta}q)$, for some $l\in L^{\infty}(\mathbb{R}^3)\cap L^1(\mathbb{R}^3)$.
\begin{definition}\label{force f}
For $N\in \mathbb{N}\cup\lbrace \infty \rbrace$ and a smooth function $l\in  W^{1,\infty}(\mathbb{R}^3)\cap W^{1,1}(\mathbb{R}^3)$, vanishing at infinity with bounded derivatives, the interaction force $f_N^{\beta}:\mathbb{R}^3\rightarrow \mathbb{R}^3$ is given by 
\begin{align*}
f_N^{\beta}(q)= N^{4\beta}l(N^{\beta}q)
\end{align*}
with $0<\beta < \frac{1}{7}$ .
\end{definition}
Analogously the total force of the system is given by $F:\mathbb{R}^{6N}\rightarrow\mathbb{R}^{3N},$ were the force exhibited on a single coordinate $j$ is given by $(F(X))_j:=\sum_{i\neq j}\frac{1}{N}f_N^{\beta}(q_i-q_j)$.

By introducing the N-particle force we can characterize the Newtonian flow as a solution of the next equation.
As the vector field is Lipschitz for fixed $N$ we have global existence and uniqueness of solutions and hence a N-particle flow.

\begin{definition}\label{Def:Newtonflow}
The Newtonian flow $\Psi_{t,s}^{N}(X)=(\Psi_{t,s}^{1,N}(X)
,\Psi_{t,s}^{2,N}(X))$ on $\mathbb{R}^{6N}$ is defined by the solution of:
\begin{align}
\frac{\dd}{\dd t} \Psi_{t,s}^{N}(X)=V(\Psi_{t,s}^{N}(X)) \in \mathbb{R}^{3N}\times\mathbb{R}^{3N}
\end{align}
where $V$ is given by $V(X)=(P,F(X))$.
\end{definition}
The crucial observation is that the force looks like the empirical mean of the continuous function $\nabla \phi (q_j-\cdot)$ of the random variable $q_j.$ In the limit $N\rightarrow\infty,$ one might expect this to be equal to the expectation value of $\nabla \phi$ given by the convolution  $f_N^{\beta} * \tilde{k}_t$ where $\tilde{k}_t(q,p)$ denotes the mass density at $q$ with momentum $p$ at time $t$.
In the following section we explain the general strategy that we follow in this paper which is based on \cite{Peter,Dustin,grass}.
\subsection{Heuristics and scratch of the proof}\label{Heuristics}
In order to translate the microscopic system to the macroscopic system or vice versa there exist two common techniques.
For $^{N}\Psi_{t,0}(X) = (q_i(t),p_i(t))_{i=1,..,N}$, one can define the corresponding microscopic or empirical density by
\begin{equation*} \mu^N_t[X] = \mu^N_0[\Psi_{t,0}(X)] := \frac{1}{N} \sum \limits_{i=1}^N \delta(\cdot - q_i(t)) \delta(\cdot - p_i(t)).
\end{equation*}
By changing the level of description one can consider an equation for a continuous mass density which describes the same situation but from a macroscopic point of view.  
Furthermore the microscopic force can be written as
\begin{equation*} \frac{1}{N}\sum\limits_{j =1}^N f_N(q_i - q_j) = f_N * \mu^N_t[X] (q_i). \end{equation*}
This relation is often used to translate the microscopic dynamics into a Vlasov equation, allowing to treat $\mu^N_t[X]$ and $k_t$ on the same footing to proof the validity of the common physical descriptions.
For typical $X$, the empirical density $\mu^N_t[X]$ and the solution $k_t$ of the Vlasov equation, which describes the time evolution of the distribution function of plasma consisting of charged particles, are close to each other as $N \to \infty$.
The underlying technique of this paper does it the other way around.
We translate the density $k_t$ into a trajectory.
So one can say that the central idea of our strategy is to sample the regularised mean-field dynamics along trajectories with random initial conditions an then control the difference between mean-field trajectories and the true microscopic trajectories in terms of expectation.
So we will not translate the trajectory into an density how it is often done.
Instead of summing up all iteration terms one expects a single particle to feel only the mean-field produced by all particles together. But as there is only the external force $f$, the time evolution of $k$ is dictated by continuity equation on $\mathbb{R}^6$ and by inserting the expectation value form above it leads us to the partial non-linear Vlasov type equation, which solution theory is also studied for delta like interaction forces \cite{HanKwanRousset, Nouri, Bardos-Besse}.
\subsubsection{Construction of the mean-field force}
To construct the mean-field force as mentions above we use the following strategy which can be heuristically explained as follows. We will split the universe in $j $ boxes of the same volume so that $n_j$ particles are in each box.
As the density is the number of particles per volume we get $k_t(q,p)=\frac{n_j}{V_jN}$ and the force acting on one particle can be written as
\begin{align*}
\bar{f}(q)=\sum_j \frac{n_j}{N}f(q-q_j)=\sum_j V_j k_t(q_j,p_j)f(q-q_j).
\end{align*}
One can read this as a Riemann sum and so it can be written as
\begin{align*}
\approx \int k_t(q,p)f(q-q_j)d^3pd^3q_j=k\ast f(q)
\end{align*}
which is the convolution of $k$ and $f$ in the $q$-coordinate.
The mean-field particles move independently, because we use the same force for every particle and we do not have pair interactions, which lead to correlations. Thus each particle has its own force-term.
In summary, for fixed $k_0$ and $N \in \mathbb{N}$, we consider for any initial configuration $X \in \mathbb{R}^{6N}$  two different time-evolutions: $\Psi_{t,0}^{N}(X)$, given by the microscopic equations and $\Phi_{t,0}^{N}(X)$, given by the time-dependent mean-field force generated by $f^N_t$. We are going to show that for typical $X$, the two time-evolutions are close in an appropriate sense. 
In other words, we have non-linear time-evolution in which $\varphi^N_{t,s}(\cdot\,; f_0)$ is the one-particle flow induced by the mean-field dynamics with initial distribution $k_0$, while, in turn, $k_0$ is transported with the flow $\varphi^N_{t,s}$. Due to the semi-group property $\varphi^N_{t,s'}\circ\varphi^N_{s',s} = \varphi^N_{t,s}$ it generally suffices to consider the initial time $s=0$.
 
\subsubsection{Quantifying the accuracy of the mean-field description}

We want to show that the time derivative of the distance $d_t d(\Psi_t, \Phi_t)$ fulfils a Gronwall inequality.
If $|f|_L < \infty$ it is easy to check, but most physically interesting cases are not Lipschitz continuous. 
For technical reasons it is useful to distinguish two cases $\|\Psi_t-\Phi_t\|\leq N^{-\gamma}$ and $\|\Psi_t-\Phi_t\|> N^{-\gamma}$ for $\gamma>0$.
So we introduce a stochastic process of the following form
\begin{align}\label{J_t heuristic}
J_t := \min  \{ 1, N^{\gamma} \|\Psi_t-\Phi_t \|_{\infty} \}.
\end{align}
The stochastic process in the real prove is slightly different from the one shown above but the idea of the proof can also be understood in the simplified form.
The process $J_t$ helps us to establish a Gronwall type argument of the following kind. For all $t \in \mathbb{R}^{+}$ the expectation $\mathbb{E}(J_t)$ value of $J_t$ tends to zero if $ \mathbb{E}(J_0)$ tends to zero. More precisely we will estimate
\begin{align*}
d_t \mathbb{E}(J_t)\le C(\mathbb{E}(J_t)+\sigma_N(1))
\end{align*}
to receive 
\begin{align*}
\mathbb{E}(J_t)\le e^{Ct}(\mathbb{E}(J_0)+\sigma_N(1)).
\end{align*}

This is useful to model the underlying problem because, if $\mathbb{E}(J_t)$ is small, then the probability to hit $1$ is small, that means that the probability $\mathbb{P}(A)$ for $A=\lbrace|\Psi_{s,0}^{N}(x)-\Phi_{s,0}^{N}(X)|\geq N^{-\gamma}\rbrace$ is small, too.
If  $ \mathbb{E}(J_0)\rightarrow 0$ and for all $t \in \mathbb{R}^{+}$ it holds that $d_t \mathbb{E}(J_t)\le C(\mathbb{E}(J_t)+\sigma_N(1))$
we can show with Gronwalls Lemma, that $\mathbb{E}(J_t)\le e^{Ct}(\mathbb{E}(J_0)+\sigma_N(1))$ and so we get $\mathbb{E}(J_t) \rightarrow 0$. Notice that we choose the same initial conditions for $\Psi$ and for $\Phi$, so $J_0=0$.

It is much better to do a Gronwall estimate on $J_t$ than directly on $P(A)$ because you need some kind of smoothness for the derivative.
Each probability of the set $A$  can be translated into an expectation value of the characteristic function with $\mathbb{E}(\chi_{A}),$ but the stochastic process $J_t$ starts to decline at the boundary of $A$ smoothly.
Both descriptions are basically the same, apart from the superiority of $J_t$ in the later proof.
The cut-off in the Definition of $J_t$ has been chosen at $1$, because if $J_t$ is smaller than $1$ it is directly implied, that $|\Psi_{t,0}^{N}(X)-\Phi_{t,0}^{N}(X)|_{\infty}< N^{-\gamma}$ by the definition of $J_t.$

In order to estimate the time derivative of $E(J_t)$ we see that the inequality $\frac{\dd}{\dd t}\mathbb{E}(J_t)\leq C(\mathbb{E}(J_t)+o_N(1))$ is trivial because the random variable $J_t$ has reached its maximum, the value $1$.
The configurations where $J_t$ is maximal, that is $|\Psi_{t,0}^{N}(X)-\Phi_{t,0}^{N}(X)|_{\infty}\geq N^{-\gamma}$ are irrelevant for finding an upper bound of $\mathbb{E}_{0}(J_{t+dt})-\mathbb{E}_0(J_t)$. 
The set of such configurations will be called $\mathcal{A}_{t}$ and we can see that the expectation value $\mathbb{E}(J_{t+dt}-J_t)$ restricted on the set $\mathcal{A}_{t}$ is less or equal $0$.
\subsubsection{Sketch of the proof}
As mentioned above we use the deviation into configurations belonging to $\mathcal{A}_{t}$ or respectively to $\mathcal{A}^c_{t}$ to estimate the expectation value
\begin{align*}
d_t J_t= \lim_{dt\rightarrow 0} \frac{\mathbb{E}(J_{t+dt}-J_t)}{dt}\le \mathbb{E}(|\frac{\dd}{\dd t}J_t|)=\mathbb{E}(|\dot{J}_t|\mathcal{A}_{t})+\mathbb{E}(|\dot{J}_t|\mathcal{A}_{t}^{C}).
\end{align*}
If $X \in \mathcal{A}_{t}$ we have $\|\Psi_t-\Phi_t\| > N^{-\gamma}$ and by the definition of the help process we have $J_t (X)=1$ and consequently $J_{t+dt}(X)\le 1$. This provides $\mathbb{E}(|\dot{J}_t|\mathcal{A}_{t})=0$.
Furthermore for $\mathbb{E}(|\dot{J}_t||\mathcal{A}_{t}^c)$ one can estimate
\begin{align}
\begin{split}
\mathbb{E}(|\dot{J}_t||A^c)&=\mathbb{E}(\|\dot{\Psi_t-\Phi_t}\|_{\infty}|A^c)N^{\gamma}\\
&\le \mathbb{E}(\|F(\Psi_t)-\bar{F}(\Phi_t)\|_{\infty}|A^c)N^{\gamma}+\mathbb{E}(J_t|A^c)\\
&\le \mathbb{E}(\|F(\Psi_t)-F(\Phi_t)\|_{\infty})N^{\gamma}+\mathbb{E}(\|F(\Phi_t)-\bar{F}(\Phi_t)\|_{\infty}|A^c)N^{\gamma}+\mathbb{E}(J_t|A^c)
\end{split}
\end{align}
The last addend is trivially bounded by $\|\dot{\Psi_t-\Phi_t}\|_{\infty}$ due to Newtons law.
To estimate the other addends we will introduce a version of law of large numbers and use the Markov inequality.
Therefore the first addend needs some preparatory work because we can not apply law of large numbers directly. For this reason we will estimate the difference by a mean value argument
\begin{align*}
\|F(\Psi)-F(\Phi)\|_{\infty}N^{\gamma}=\|\frac{1}{N}\sum_{j\neq k}f(q_{j}^{\Psi}-q_{k}^{\Psi})-f(q_{j}^{\Phi}-q_{k}^{\Phi})\|_{\infty}N^{\gamma}\leq\sum_{j\neq 1}g(q_{1}^{\Phi}-q_{j}^{\Phi})\cdot 2\underbrace{\|\Psi-\Phi\|_{\infty}}_{\leq N^{-\gamma}} .
\end{align*}
The last term is independent from $\Psi$ and because of that stochastically independent.
 The estimation of $g$ provided by the law of large numbers determines the choice of the parameter $\beta$ due to the occurring variance term. 
The variance is given by the integral over $g^2$ which is of order of $N^{10\beta-3\beta}$ due to integration by substitution.

\subsection{The macroscopic model: The Vlasov-Dirac-Benney equation and the characteristic flow of the mean-field system}
Looking for a macroscopic law of motion for the particle density leads us to a continuity equation more specifically to the  Vlasov-Dirac-Benney equation owing to the delta like potential.
The theory of solution for each dimension is given for smooth data with finite Sobolev regularity. To state a characterization of solutions we shall  first introduce  the  Penrose stability condition \cite{Penrose}  for homogeneous equilibria $k(p)$. 
\begin{definition}
For $p\mapsto k(q,p)$ the Penrose function is defined by
$$ \mathcal{P}(\gamma, \tau, \eta, k)= 1 -  \int_{0}^{+ \infty} e^{-(\gamma + i \tau)s}\, \frac{i \eta}{1 + |\eta|^2}\cdot  ( \mathcal{F}_{p} \nabla_{p}  k)(\eta s ) \, d s, \quad \gamma >0,\, \tau \in \mathbb{R}, \,\eta \in \mathbb{R}^d\backslash\{0\}$$
where $ \mathcal{F}_{p}$ denotes the Fourier transform in momentum coordinate $p$.
The profile $k$ satisfies the $c_{0}$  Penrose stability condition if 
\begin{equation}
\label{Penrose}
\inf_{(\gamma, \tau, \eta) \in (0,+\infty) \times \R \times \mathbb{R}^d\setminus\{0\}} \left| \mathcal{P}(\gamma, \tau, \eta, k) \right| \geq c_{0}.
\end{equation}
\end{definition} 
These assumptions are for example satisfied in a small data regime, for “one bump” profiles in $d=1$ and also for any radial non-increasing functions in any dimension.
Furthermore the solution theory requires for the initial density $k_0(q,p) \in \mathcal{H}^{2m}_{2r}$.
Therefore we introduce the weighted Sobolev norms for  $k \in \N, r \in \R$ given by
\begin{align*}
\| k \|_{\mathcal{H}^{k}_{r}} := \left(\sum_{|\alpha| + |\beta| \leq k} \int_{\mathbb{T}^d} \int_{\R^d} (1+ |p|^2)^{r} |\partial^\alpha_x \partial^\beta_p k|^2 \, dp dq \right)^{1/2},
\end{align*}
where $\alpha = (\alpha, \cdots, \alpha_d), \beta = (\beta_1, \cdots, \beta_d) \in \N^d$,
$
|\alpha|= \sum_{i=1}^d \alpha, |\beta|= \sum_{i=1}^d \beta_i,
$ and
$
\partial^\alpha := \partial^{\alpha_1}_{q_1} \cdots \partial^{\alpha_d}_{q_d},\partial^\beta_p := \partial^{\beta_1}_{p_1} \cdots \partial^{\beta_d}_{p_d}.$

Let the initial density $k_0(q,p) \in \mathcal{H}^{2m}_{2r}$ with $2m>4+\frac{d}{2}+\lfloor\frac{d}{2}\rfloor$,  $2r>\max(d,2+\frac{d}{2})$ be such that $p\mapsto k_0(q,p)$ satisfies the $\frac{c_{0}}{2}$ Penrose stability condition we consider the corresponding mean-field equation, namely the Vlasov-Dirac-Benney equation
\begin{equation} \label{Vlasov}
\left\{
\begin{aligned}
& \partial_t k \: + \: p \cdot \nabla_q k \: + \:   E \cdot \nabla_p k \:  = \: 0, \\
& E = - \nabla_q \phi, \quad  \phi=\int_{\R^d} k \, dp, \\
&k \vert_{t=0} = k_0(q,p).
\end{aligned}
\right. 
\end{equation}

This equation describes a plasma of identical charged particles with electrostatic or gravitational interactions.
For a fixed initial distribution $k_0 \geq 0$ and $\int k_t(q,p)d^3p=\tilde{k}_t(q)$ we denote by $k^N_t$, the probability density of a particle which, at
time t, occupies the position $q\in\R^3$ and the velocity $q\in\R^3$. It states a solution of  \eqref{Vlasov} with initial datum $k^N_t(0, \cdot,\cdot) = k_0$. 

But due to its highly singular nature, the solution theory is not trivial. One existence result we want to mention here is \cite{Besse}. The authors consider so called water bags, which are piecewise constant functions, as initial data. Another result is \cite{Nouri} which proves the existence for short times of analytical solutions in dimension one.
Bardos and Besse \cite{Bardos-Besse} could show, in dimension one, that the problem  is wellposed for functions that for all $q$ have the shape of one bump.
Later Han-Kwan and Rousset \cite{HanKwanRousset} showed the well-posedness of the system \eqref{Vlasov} in any dimension for smooth data with finite Sobolev regularity  such that for every $q$, $k$ satisfies a Penrose stability condition. More precisely they proved that there exists $T>0$ for which there is a unique solution to (\ref{Vlasov}) with initial condition $k^0$ such that   $k \in \mathcal{C}([0, T], \mathcal{H}^{2m-1}_{2r})$,  $ \phi \in L^2([0, T], H^{2m})$ and  satisfies  the $c_{0}/2$ Penrose condition  for every $t\in [0, T]$.

In this paper we will assume existence  of a $C^{\infty}$ solution of the Vlasov-Dirac-Benney equation and derive it from the microscopic $N$-particle dynamics.
The characteristics of Vlasov equation similar to Definition \ref{Def:Newtonflow} are given by the following system of Newtonian differential equations
\begin{align}
 \begin{cases}
&\frac{dq}{dt}=p \\
&\frac{dp}{dt}=f*\widetilde{k}_t(q) \label{Vlasov.flow}
\end{cases}
\end{align}
where $\widetilde{k}_t$ denotes the previously introduced `spatial density'.
We now introduce the effective one-particle flow $(\varphi_{t,s}^{N})_{t\leq s}$ for any probability density $k_0:\mathbb{R}^{6}\rightarrow \mathbb{R}_{0}^{+}$ and lift it up to the N-particle phase space.

\begin{definition}
Let $k_0:\mathbb{R}^{6}\rightarrow \mathbb{R}_{0}^{+}$ be a probability density and $k:\mathbb{R}\times \mathbb{R}^{6}\rightarrow \mathbb{R}_{0}^{+}$, which gives for each time $t$ the effective distribution function time-evolved with respect to $\varphi_{t,s}^{N}: k(0,\cdot)=k_0$ and
\begin{align*}
k_{t}^{N}(x):=k^{N}(t,x)=k_0(\varphi_{0,t}^{N}(x)).
\end{align*}
For $x=(q,p)$, the effective flow $\varphi_{t,s}^{N}$ itself is defined by
\begin{align*}
\frac{\dd}{\dd t}\varphi_{t,s}^{N}(x)=v^{t}(\varphi_{t,s}^{N}(x))
\end{align*}
where $v^{t}$ is given by $v^{t}(x)=(p,\bar{f}_{t}^{N}(q))$.
Here the mean-field force $\bar{f}_{t}^{N}$ is defined as $\bar{f}_{t}^{N}=f_N^{\beta}*\tilde{k}_{t}^{N}$ and $\tilde{k}_t^{N}:\mathbb{R}\times \mathbb{R}^{3}\rightarrow \mathbb{R}_{0}^{+}$ is given by 
\begin{align*}
\tilde{k}_{t}^{N}(q):=\int k_t^{N}(p,q) d^{3}p.
\end{align*}
\end{definition}
By using this approach, a new trajectory is obtained that is influenced by the mean-field force instead of the pair interaction force like in the Newtonian system \ref{Def:Newtonflow}.
Now we have two trajectories which we will compare and show later that they are close to each other.\\
To this end, we consider the lift of $\varphi^N_{t,s}(\cdot)$ to the $N$-particle phase-space, which we denote by ${}^N\Phi_{t,s}$. 
To lift the effective one-particle flow to the N-particle space we define the mean-field flow by:
\begin{definition}\label{Def:Mean-fieldflow}
The respective $\Phi_{t,s}^{N}=(\Phi_{t,s}^{1,N},\Phi_{t,s}^{2,N})=(\varphi_{t,s}^{N})^{\otimes N}$ satisfies 
\begin{align*}
\frac{\dd}{\dd t}\Phi_{t,s}^{N}(X)=\bar{V}_t(V),
\end{align*}
with $\bar{V}_t(X)=(P,\bar{F}_t(Q))$ and $\bar{F}_t$ given by $(\bar{F}_t(Q))_j:=\bar{f}_t^{N}(q_j)$.
\end{definition}

The mean-field particles move independent because the same force acts on every particle and we do not have pair interactions, which lead to correlations. 
In summary, for fixed $k_0$ and $N \in \mathbb{N}$, we consider for any initial configuration $X \in \mathbb{R}^{6N}$  two different time-evolutions: $\Psi_{t,0}^{N}(X)$, given by the microscopic equations and $\Phi_{t,0}^{N}(X)$, given by the time-dependent mean-field force generated by $f^N_t$. We are going to show that for typical $X$, the two time-evolutions are close in an appropriate sense. 
In other words, we have non-linear time-evolution in which $\varphi^N_{t,s}(\cdot\,; f_0)$ is the one-particle flow induced by the mean-field dynamics with initial distribution $k_0$, while, in turn, $k_0$ is transported with the flow $\varphi^N_{t,s}$. Due to the semi-group property $\varphi^N_{t,s'}\circ\varphi^N_{s',s} = \varphi^N_{t,s}$ it generally suffices to consider the initial time $s=0$.

In the following section we show that the two flows (\ref{Def:Mean-fieldflow}),(\ref{Def:Newtonflow}) are close to each other and so the microscopic and the macroscopic approach describe the same system. 
\section{A mean-field limit for the Vlasov system}
In the following section we show that the $N$-particle trajectory $\Psi_t$ starting from $\Psi_0$ (i.i.d. with the common density $k_0$) remains close to the mean-field trajectory $\Phi_t$ with the same initial configuration $\Psi_0=\Phi_0$ during any finite time $[0,T]$ and so the microscopic and the macroscopic approach describe the same system. Throughout this paper $C$ denotes a positive finite constant which may vary from place to pace but most importantly it will be independent of $N$.
\subsection{Statement of the results}
\begin{theorem} \label{maintheorem}
Let $T>0$ be such that a solution $k \in \mathcal{C}([0, T], \mathcal{H}^{2m-1}_{2r})$ of \eqref{Vlasov} exists. 
Moreover, let $(\Phi^{\infty}_{t,s})_{t, s\in \mathbb{R}}$ be the related lifted effective flow defined in \eqref{Def:Mean-fieldflow} as well as $({\Psi}^{N}_{t,s})_{t,s\in \mathbb{R}}$ the $N$-particle flow defined in \eqref{Def:Newtonflow}. 
 If  $0<\alpha<\beta<\frac{1}{7}$, then for any $\gamma>0$ there exists $C_\gamma>0$ such that for all $N\in \mathbb{N}$ with $N\geq N_0$ it holds that
\begin{align}
\mathbb{P}\big(X\in \mathbb{R}^{6N}:\sup_{0\le s \le T}|\Psi_{s,0}^{N}(X)-{\Phi_{s,0}^{\infty}}(X)|_{\infty}>  N^{-\alpha} \big)\le C_\gamma N^{-\gamma}.\label{result1}
\end{align}
\end{theorem}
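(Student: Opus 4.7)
The plan is to run a Gronwall argument on the stochastic process
\begin{equation*}
J_t(X) := \min\bigl\{1,\, N^{\alpha}\sup_{0\le s\le t}|\Psi_{s,0}^{N}(X)-\Phi_{s,0}^{\infty}(X)|_{\infty}\bigr\},
\end{equation*}
which measures a rescaled (and truncated) distance between the two flows. Since the two flows share the same initial data, $J_0\equiv 0$, and once we prove that $\mathbb{E}(J_t) \le e^{Ct}\mathbb{E}(J_0) + o_N(1)$ for a constant $C$ independent of $N$, a single application of Markov's inequality turns the smallness of $\mathbb{E}(J_T)$ into the claimed polynomial probability bound $\mathbb{P}(\{\sup_{s\le T}|\Psi_{s,0}^{N}-\Phi_{s,0}^{\infty}|_\infty > N^{-\alpha}\}) = \mathbb{E}(\chi_{\{J_T=1\}}) \le \mathbb{E}(J_T)$; refining this through moment estimates $\mathbb{E}(J_t^k)$ (or taking $\alpha$ strictly below the exponent $\beta$ we have room for) will deliver the arbitrarily small polynomial decay $N^{-\gamma}$ on the right-hand side of \eqref{result1}.

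The first step is the differential inequality for $\mathbb{E}(J_t)$. I would split the expectation over the event $\mathcal{A}_t = \{|\Psi_t-\Phi_t|_\infty \ge N^{-\alpha}\}$ and its complement. On $\mathcal{A}_t$ we have $J_t=1$ already, so $J_{t+dt}\le J_t$ and this contribution is nonpositive. On $\mathcal{A}_t^c$ we use Newton's equations to write
\begin{equation*}
\|\tfrac{d}{dt}(\Psi_t-\Phi_t)\|_\infty \le \|P^{\Psi}-P^{\Phi}\|_\infty + \|F(\Psi_t)-\bar F(\Phi_t)\|_\infty,
\end{equation*}
with the momentum difference already controlled by $J_t/N^\alpha$. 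The force difference is decomposed as
\begin{equation*}
\|F(\Psi_t)-\bar F(\Phi_t)\|_\infty \le \|F(\Psi_t)-F(\Phi_t)\|_\infty + \|F(\Phi_t)-\bar F(\Phi_t)\|_\infty,
\end{equation*}
which separates the ``trajectory shift'' part from the ``mean-field fluctuation'' part.

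The heart of the argument is controlling these two force terms. The mean-field fluctuation $\|F(\Phi_t)-\bar F(\Phi_t)\|_\infty$ is a genuine law-of-large-numbers quantity since the mean-field particles stay independent and identically distributed under $\Phi_t$; by a Bernstein/concentration bound applied coordinatewise and a union bound over the $N$ particles, its sup-norm is $o_N(1)$ (with the required polynomially small exceptional probability) provided the variance $\mathrm{Var}(f_N^\beta)\sim\|f_N^\beta\|_2^2\sim N^{5\beta}$ combined with the $1/N$ in front satisfies $N^{5\beta-1}\to 0$, which holds for $\beta<1/7$ with plenty of margin. The trajectory-shift term $\|F(\Psi_t)-F(\Phi_t)\|_\infty$ is more delicate because $f_N^\beta$ has Lipschitz constant of order $N^{5\beta}$, so a naive estimate loses too much. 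Following the heuristic, I would bound the difference pointwise by a function $g$ (a smooth majorant for $|\nabla f_N^\beta|$ thickened over a neighbourhood of the support) evaluated at the mean-field configuration times $\|\Psi_t-\Phi_t\|_\infty\le N^{-\alpha}$ on $\mathcal{A}_t^c$. The crucial gain is that the resulting sum $\frac1N\sum_{j\ne k} g(q_k^\Phi-q_j^\Phi)$ depends only on the i.i.d.\ $\Phi$-particles and so is again amenable to a law-of-large-numbers estimate around its expectation $\|g*\tilde k_t\|_\infty$, which is $O(1)$.

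The hard part, and where the exponent $\beta<1/7$ is forced, is exactly this concentration step for the auxiliary $g$. Indeed $\|g\|_\infty \sim N^{5\beta}$ and $\|g\|_2^2 \sim N^{10\beta-3\beta}=N^{7\beta}$, so Bernstein-type inequalities require $N^{7\beta-1}\to 0$ to beat the variance, which is precisely $\beta<1/7$. Combining everything yields $\frac{d}{dt}\mathbb{E}(J_t)\le C\mathbb{E}(J_t)+o_N(1)$ outside an exceptional set of polynomially small probability; Gronwall then finishes the estimate. The only remaining care is to pass from pointwise-in-time probabilistic bounds to the supremum over $[0,T]$, which I would handle by discretising $[0,T]$ on a fine grid of size $N^{-K}$ (exploiting that positions and momenta move at bounded speed on $\mathcal{A}_t^c$), applying the previous bounds at each grid point, and using a union bound, absorbing the extra $N^{K}$ factor into the arbitrary polynomial $N^{-\gamma}$.
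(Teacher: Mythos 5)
Your overall strategy coincides with the paper's: a Gronwall estimate on a truncated auxiliary process, the splitting of the force error into a law-of-large-numbers fluctuation $|F(\Phi)-\bar F(\Phi)|$ plus a trajectory-shift term controlled by the majorant $g_N^\beta\sim N^{5\beta}\mathds 1_{\{\mathrm{supp}\,l\}}(N^\beta\cdot)$ of the local Lipschitz constant, and the identification of $\|g_N^\beta\|_2^2\sim N^{10\beta-3\beta}=N^{7\beta}$ as the quantity forcing $\beta<\tfrac17$. Two remarks on where you deviate. First, the paper does not use your plain isotropic $J_t$: it weights the spatial deviation by $\sqrt{\ln N}$, adds a term $N^{5\beta-1}$, and multiplies by a decaying factor $e^{\lambda\sqrt{\ln N}(T-t)}$. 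This anisotropic scaling is there to absorb the $\ln N$ factors that the concentration thresholds necessarily carry, so that the Gronwall growth factor is $e^{C\sqrt{\ln N}\,T}$ (sub-polynomial) instead of $N^{CT}$; with your version you must verify that the effective Lipschitz constant $\frac1N\sum_j g_N^\beta(q_i^\Phi-q_j^\Phi)$ on the good set is genuinely $O(1)$ and not $O(\ln N)$, otherwise your Gronwall constant destroys the estimate. Relatedly, the arbitrary $N^{-\gamma}$ on the right-hand side does not come from moments $\mathbb{E}(J_t^k)$ or from slack in $\alpha<\beta$: it comes from the fact that the exceptional sets for the law of large numbers (deviations beyond $C_\gamma\sqrt{r(N)/N}\,\ln N$) have probability smaller than any power of $N$, via an exponential-moment/Markov argument; the Gronwall step merely propagates this super-polynomially small probability.

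The genuine gap is the following. The theorem compares $\Psi^N$ with $\Phi^{\infty}$, the lifted flow of the unregularised Vlasov--Dirac--Benney characteristics, whereas your entire argument --- in particular the i.i.d.\ structure you invoke for the law of large numbers --- compares $\Psi^N$ with the regularised mean-field flow $\Phi^N$ driven by $f_N^\beta*\tilde k^N_t$. You never close the distance between $\Phi^N$ and $\Phi^\infty$. The paper devotes a separate section to exactly this: a deterministic Gronwall estimate on $\sup_x|\varphi^N_{t,0}(x)-\varphi^\infty_{t,0}(x)|$, using the convolution bounds $\|\nabla f_N^\beta*\tilde k\|_\infty\leq C\|\Delta\tilde k\|_\infty$, the stability of $\tilde k^N_t$ against $\tilde k^\infty_t$ under the flow, and the convergence of $f_N^\beta*\tilde k$ to the limiting force obtained by Taylor-expanding $\tilde k$ against the shrinking support of $\phi_N^\beta$. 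Without this step you have derived only the regularised Vlasov dynamics, not the statement \eqref{result1} involving $\Phi^\infty_{s,0}$.
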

It is intuitively clear that the coarse grained effective description gets more appropriate as the number of particles increases and becoming exact in the limit $N\rightarrow\infty$. This Theorem implies Propagation of Chaos an thus convergence of the marginals of the $N$-particle density towards products of solutions of the mean-field equation.

\subsection{Notation and preliminary studies}

The solution of the Vlasov-Dirac-Benney equation $k_t^{N}:\mathbb{R}^{6}\rightarrow\mathbb{R}_{0}^{+}$ can be understood as a one-particle probability density. All probabilities and expectation values are meant with respect to the product measure given at a certain time.
For any random variable $R:\mathbb{R}^{6N}\rightarrow\mathbb{R}$ and any element $B$ of the Borell $\sigma$-algebra we have
\begin{align*}
\mathbb{P}_t(R \in B)=\int_{{R}^{-1}(B)}\prod_{j=1}^{N}k^{N}(x_j)dX \hspace{0.3cm} \text{and}\hspace{0.3cm}
\mathbb{E}_t(R)=\int_{\mathbb{R}^{6N}}R(X)\prod_{j=1}^{N}k^{N}(x_j)dX.
\end{align*}
Since the measure is invariant under $\Phi_{t,s}^{N}$ it follows that 
\begin{align*}
\mathbb{E}_s(R\circ\Phi_{t,s}^{N})=\int_{\mathbb{R}^{6N}}R(\Phi_{t,s}^{N}(X))\prod_{j=1}^{N}k_s^{N}(x_j)dX=\int_{\mathbb{R}^{6N}}R(X)\prod_{j=1}^{N}k_s^{N}(\varphi_{s,t}^{N}(x_j))dX
\end{align*}
and since $k_s^{N}(\varphi_{s,t}^{N}(x_j))=k_t^{N}(x_j)$ we get $\mathbb{E}_s(R\circ\Phi_{t,s}^{N})=\mathbb{E}_t(R)$.
During the proof it is helpful to deviate into cases and therefore we will restrict our selves to certain configurations. For this purpose we introduce the restricted expectation value.

\begin{definition}
Let $Z$ be a random variable, $A \subset \mathbb{R}^{6N}$ a set,
then the restricted expectation value is given by
\begin{align*}
\mathbb{E}(Z|A):=\mathbb{E}(Z^{A}) \hspace{1cm} &\text{with} \hspace{1cm} Z^{A}(\omega)=     \begin{cases}
    Z(\omega)& \omega \in A\\
    0 & \omega \not \in A,\\
    \end{cases} \\
\mathbb{E}(Z)=\sum_{\omega\in \mathbb{R}^{6N}}Z(\omega)\mathbb{P}(\omega) \hspace{1cm}&\text{and} \hspace{1cm} \mathbb{E}(Z|A)=\sum_{\omega \in A}Z(\omega)\mathbb{P}(\omega).
\end{align*}
\end{definition}

Now we introduce a suitable notation of the distance on $\mathbb{R}^{6}$, which enables us to prove that for finite time $\Psi_{t,0}^{N}$ and $\Phi_{t,0}^{N}$ will typically be close with respect to that notation of this distance.
Since we are dealing with probabilistic initial conditions we introduce a stochastic process $J_t$ which is such that a small expectation value of $J_t$ implies that $\Psi_{t,0}^{N}(X)$ an $\Phi_{t,0}^{N}(X)$ are close as described  in the previous chapter.
In view of Theorem \ref{maintheorem}, our aim is to show that
$
\mathbb{P}_0(\sup_{0\leq s\leq t}|\Psi_{s,0}^{N}-\Phi_{s,0}^{N}|_{\infty}> N^{-\gamma})$ tends to zero faster than any inverse power of $N$. This will be implemented modifying the stochastic process $J_t$ \eqref{J_t heuristic} in order to separate the error terms coming form the law of large numbers from other sources of errors. This can be done by defining $J_t$ in the following way:
\begin{definition}\label{J_t}
Let $\Phi_{s,0}^{N}(X)$ be the mean-field flow defined in \eqref{Def:Mean-fieldflow} and $\Psi_{s,0}^{N}(X)$ the microscopic flow defined in \eqref{Def:Newtonflow}. We denote by $\Phi_{s,0}^{N,1}(X)=(q_i(t))_{1\leq i\leq N}$ and $\Phi_{s,0}^{N,2}(X)=(p_i(t))_{1\leq i\leq N}$ the projection onto the spatial or respectively the momentum coordinates.
Let for $T>0$ and without loose of generality $N>1$ the auxiliary process be defined as follows
\begin{align*}
&J_t(X):=\\
&\min\left\lbrace 1,  \sup_{0 \leq s \leq t}\left\lbrace \sigma_{N,t}N^{\alpha}\left( \  \sqrt{\ln(N)} \left| \Psi^{1,N}_{t,0}(X) - \Phi^{1,N}_{t,0}(X) \right|_\infty+ \left| \Psi^{2,N}_{t,0}(X) - \Phi^{2,N}_{t,0}(X)\right|_\infty+N^{5\beta-1} \right)	\right\rbrace  \right\rbrace
\end{align*}
for $0\leq t\leq T$ with scaling factor $\sigma_{N,t}= e^{\lambda \sqrt{\ln(N)}(T-s)}$.
Here $|\cdot|_{\infty}$ denotes the supremum norm on $\mathbb{R}^{6N}$.
\end{definition}
The metric $ \lvert ^N\Psi_{t,0}(X) - \Phi_{t,0}^{N}(X) \rvert_\infty$ is much stronger than usual weak distances between probability measures, thus allowing for better stability estimates.
The distance in spatial and momentum coordinates are weighted differently, exploiting the second-order nature of the system.

The spatial and momentum coordinates are weighed differently to take advantage of the system's second-order nature when comparing microscopic trajectories to the mean-field equation's characteristic curves. The growth of the spatial distance is trivially bounded by the difference of the respective momenta. The idea is thus to be a little more strict on deviations in space, so to speak, and use this to obtain better control on fluctuations of the force.
Moreover the scaling factor $e^{\lambda \sqrt{\ln(N)}(T-s)}$ optimizes the rate of convergence as compensates the time dependent natural fluctuations.

We now want to estimate the time derivative of $E(J_t)$.
Whenever the random variable $J_t$ has the value $1$, it has reached its maximum so the inequality $\partial_t^{+}\mathbb{E}(J_t)\leq C(\mathbb{E}(J_t)+o_N(1)))$ is trivial. The configurations where $J_t$ is maximal, that is $|\Psi_{t,0}^{N}(X)-\Phi_{t,0}^{N}(X)|_{\infty}\geq N^{-\alpha}$ are irrelevant for finding an upper bound of $\partial_t^{+}\mathbb{E}(J_t)$. The set of such configurations will be called $\mathcal{A}_{t}$.
We will show  that the expectations value $\partial_t^{+}\mathbb{E}(J_t)$ restricted on the set $\mathcal{A}_{t}$ is less or equal $0$.

We only have to consider the cases where $J_t$ is smaller than $1$ since $\partial_t^{+}J_t=0$ for $J_t=1$, but then we have the boundary condition by definition of the random variable, i.e $|\Psi_{t,0}^{2,N}(X)-\Phi_{t,0}^{2,N}(X)|_{\infty}< N^{-\alpha}$.
Due to the pre-factor $\sqrt{\ln(N)}$ in Definition \ref{J_t}, the particular anisotropic scaling of our metric will allow us to ``trade'' part of this  divergence for a tighter control on spatial fluctuations. This will suffice to establish the desired convergence, using the fact that $e^{\sqrt{\ln(N)}}$ grows slower than $N^\epsilon$ for any $\epsilon >0$. (\cite{Dustin} implemented the same idea).

By the definition of $J_t$ we get the boundary condition for free and
the construction of $J_t$ motivates the following lemma. 
\begin{lemma}\label{Mainlemma}
For $t>0$ there exists a constant $C_\gamma<\infty$, under the assumptions of Theorem \ref{maintheorem}, such that 
\begin{align*}
\mathbb{E}_0(J_t)\leq C_{\gamma}N^{-\gamma}.
\end{align*}
for $\gamma>0.$
\end{lemma}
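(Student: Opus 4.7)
The plan is to implement the Gronwall strategy sketched in Section 1.4.3 with the refined process $J_t$ from Definition \ref{J_t}, and to extract the rate $N^{-\gamma}$ by balancing the three contributions inside $J_t$ against the fluctuations coming from the law of large numbers. I begin by noting that on the event $\mathcal{A}_t=\{J_t=1\}$ one has $\partial_t^+ J_t\le 0$, so it suffices to control the right derivative of $J_t$ on $\mathcal{A}_t^c$. On that set the supremum inside the definition is attained (or nearly attained) at time $t$ itself; differentiating and using Definition \ref{Def:Newtonflow} and Definition \ref{Def:Mean-fieldflow} gives
\begin{align*}
\partial_t^+\!\left(\sqrt{\ln N}\,|\Psi^{1,N}_{t,0}-\Phi^{1,N}_{t,0}|_\infty+|\Psi^{2,N}_{t,0}-\Phi^{2,N}_{t,0}|_\infty\right)
&\le \sqrt{\ln N}\,|\Psi^{2,N}_{t,0}-\Phi^{2,N}_{t,0}|_\infty\\
&\quad+|F(\Psi_{t,0}^N)-\bar F_t(\Phi_{t,0}^N)|_\infty,
\end{align*}
so the first term is already of the form $\sqrt{\ln N}\,(\sigma_{N,t}N^\alpha)^{-1}J_t$, which together with the derivative $-\lambda\sqrt{\ln N}\,\sigma_{N,t}$ of the prefactor is exactly what the compensating scaling factor $e^{\lambda\sqrt{\ln N}(T-t)}$ is designed to absorb.

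Next I would split the force difference in the canonical way,
\begin{align*}
|F(\Psi^N_{t,0})-\bar F_t(\Phi^N_{t,0})|_\infty
\le |F(\Psi^N_{t,0})-F(\Phi^N_{t,0})|_\infty+|F(\Phi^N_{t,0})-\bar F_t(\Phi^N_{t,0})|_\infty,
\end{align*}
and treat each piece separately. For the first, a mean value estimate and Definition \ref{force f} give the pointwise bound by $\tfrac{2}{N}\sum_{k\neq j} g(q_j^\Phi-q_k^\Phi)\,|\Psi^{1,N}-\Phi^{1,N}|_\infty$ with $g(q)=N^{5\beta}\,\|\nabla l\|_\infty\,\mathbf{1}_{\{|N^\beta q|\le R\}}$. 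Here I use $|\Psi-\Phi|_\infty\le (\sigma_{N,t}N^\alpha\sqrt{\ln N})^{-1}J_t$ on $\mathcal{A}_t^c$; therefore what remains is to bound the random sum by its expectation $\int g\,\tilde k_t^N\lesssim N^{2\beta}$ up to a fluctuation. For the second piece the summand $f_N^\beta(q_j-q_k)-\bar f_t^N(q_j)$ is centered once we condition on $q_j$, so a straight application of Markov's inequality at high moment, using $\int (f_N^\beta)^2\,d q\sim N^{5\beta}$, gives a fluctuation of order $N^{(5\beta-1)/2}\sqrt{\ln N}$ up to sub-polynomial corrections — this is exactly the role of the baseline $N^{5\beta-1}$ inside Definition \ref{J_t}.

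The main technical step, and the place the constraint $\beta<\tfrac17$ appears, is showing that the empirical mean $\tfrac1N\sum_k g(q_j^\Phi-q_k^\Phi)$ concentrates near $\mathbb{E}[g]\sim N^{2\beta}$ with failure probability smaller than any inverse power of $N$. Because the individual summands are i.i.d.\ under the product measure (using the invariance identity $\mathbb{E}_0(R\circ\Phi^N_{t,0})=\mathbb{E}_t(R)$ noted in Section 2.2), I would run a high moment estimate of the form $\mathbb{E}[(S-\mathbb{E}S)^{2k}]\lesssim (k\,\mathrm{Var}\,S)^k$, with $\mathrm{Var}\,S\lesssim N^{-1}\int g^2\lesssim N^{7\beta-1}$. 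Choosing $k$ a large multiple of $\gamma/\log_N$ and invoking Markov, the bad set where the sum exceeds $N^{2\beta}\,\mathrm{poly}(\log N)$ has probability $\le C_\gamma N^{-\gamma}$. All these bad events get contributed separately via their own $N^{-\gamma}$ factors.

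Combining the three contributions and taking expectation yields a differential inequality
\begin{align*}
\partial_t^+\mathbb{E}_0(J_t)\le C\sqrt{\ln N}\,\mathbb{E}_0(J_t)+C_\gamma N^{-\gamma},
\end{align*}
where the $\sqrt{\ln N}$ factor is precisely compensated by the factor $\sigma_{N,t}$ built into $J_t$ together with the anisotropic weight $\sqrt{\ln N}$ on the spatial coordinate (so that the Lipschitz-type term $\sum g\cdot|\Psi-\Phi|$ produces an $N^{2\beta}\cdot(\sqrt{\ln N})^{-1}$ coefficient that multiplies $J_t$, and the additional $\sqrt{\ln N}$ on the momentum growth rate is absorbed by $\lambda\sqrt{\ln N}$ in $\sigma_{N,t}$). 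Since $J_0=0$ by the coupling of initial conditions, Gronwall's inequality then gives $\mathbb{E}_0(J_t)\le C_\gamma e^{C\sqrt{\ln N}\,T}N^{-\gamma}$, and because $e^{C\sqrt{\ln N}}$ is sub-polynomial in $N$, the $N^{-\gamma}$ bound survives after relabelling $\gamma$. The main obstacle I expect is not the Gronwall step itself but the concentration of the empirical sum of $g$: the variance bound $N^{7\beta-1}$ is exactly what forces $\beta<\tfrac17$, and pushing the moment estimate through while keeping the log-losses compatible with the $\sqrt{\ln N}$ scaling in $\sigma_{N,t}$ is the delicate part of the argument.
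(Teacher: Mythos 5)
Your proposal follows the same architecture as the paper's proof: the split into the event $\{J_t=1\}$ and its complement, the further exclusion of the unlikely sets where $|F(\Phi)-\bar F(\Phi)|_\infty$ resp.\ $|G(\Phi)-\bar G(\Phi)|_\infty$ are large (the paper's $\mathcal{B}_t$ and $\mathcal{C}_t$), the mean-value control of $|F(\Psi)-F(\Phi)|_\infty$ through $g_N^\beta$ combined with $|\Psi^1-\Phi^1|_\infty<N^{-\alpha}$ on $\mathcal{A}_t^c$, the law-of-large-numbers input (you via high moments, the paper via the exponential-moment Lemma \ref{Wahrscheinlichkeitslemma}), the crude a priori bound on $\partial_t^+J_t$ times the small probability on the bad sets, and the closing Gronwall argument in which $\lambda$ in $\sigma_{N,t}$ is chosen to cancel the $\sqrt{\ln N}$ growth rate. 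Up to one point the two arguments are essentially identical.

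That one point is, however, the decisive step, and there your accounting does not close. You bound the empirical mean $\tfrac1N\sum_k g_N^\beta(q_j^\Phi-q_k^\Phi)$ by its expectation $\int g_N^\beta\,\tilde k_t^N\sim N^{2\beta}$ plus a controlled fluctuation, and then assert that the resulting coefficient of $J_t$ in the differential inequality is $C\sqrt{\ln N}$. These two statements are incompatible: with a Lipschitz coefficient of size $N^{2\beta}$, the anisotropic weighting only converts it into $N^{2\beta}(\ln N)^{-1/2}$ in front of the weighted distance, and Gronwall then produces a factor $e^{CTN^{2\beta}/\sqrt{\ln N}}$, which is super-polynomial for any $\beta>0$ and destroys the $N^{-\gamma}$ conclusion. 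The paper's proof closes precisely because Lemma \ref{Abschätzungen Faltung} provides the $N$-\emph{uniform} bound $\|g_N^\beta*\tilde k\|_\infty\le C\|\Delta\tilde k\|_\infty$, so that on $\mathcal{C}_t^c$ the empirical mean of $g_N^\beta$ is bounded by an $N$-independent constant (plus the vanishing fluctuation $N^{7\beta-1}\ln N$), making the Gronwall coefficient genuinely of order $\sqrt{\ln N}$. To repair your argument you must either establish such an order-one bound on $g_N^\beta*\tilde k_t^N$ — which is the actual content of the paper's convolution lemma, obtained there by moving derivatives of $\phi^N$ onto $\tilde k$ rather than using the crude Young bound $\|g_N^\beta\|_1\|\tilde k\|_\infty\sim N^{2\beta}$ that you invoke — or restructure the estimate. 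A secondary, non-fatal mismatch: the fluctuation that your moment bound actually delivers for $F-\bar F$ is of order $N^{(5\beta-1)/2}$ up to logarithms, not the $N^{5\beta-1}$ baseline built into $J_t$; since $5\beta+\alpha-1<0$ and $(5\beta-1)/2+\alpha<0$ both hold for $\alpha<\beta<\tfrac17$, either exponent keeps $J_0$ below $1$, but you should use one consistently.
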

Theorem \ref{maintheorem} follows directly from Lemma \ref{Mainlemma}, since the following probability can be estimated according to the description above.
\begin{align*}
\mathbb{P}_{0}(\sup_{0 \leq s\leq t}|\Psi_{s,0}^{N}(X)-\Phi_{0,s}^{N}(X)|)\geq N^{-\alpha})=\mathbb{P}_{0}(J_t=1)\leq \mathbb{E}_{0}(J_t).
\end{align*}
The proof of Lemma \ref{Mainlemma} is based on a Gronwall argument and therefore we will give an upper bound on  $\partial_t^{+}\mathbb{E}_0(J_{t})$ by introducing a suitable partition of the phase space $\mathbb{R}^{6N}$.
A first observation is that the growth of  $\partial_t^{+}\mathbb{E}_0(J_{t})$ stems from the fluctuation in the force, which itself can be estimated by
\begin{align*}
|F(\Psi_{t,0}^{N}(X))-\bar{F}(\Phi_{t,0}^{N}(X))|_{\infty} \leq |F(\Psi_{t,0}^{N}(X))-F(\Phi_{t,0}^{N}(X))|_{\infty}+|F(\Phi_{t,0}^{N}(X))-\bar{F}(\Phi_{t,0}^{N}(X))|.
\end{align*}
To control theses two addends we will introduce unlikely sets.
The sets of configurations $X$ for which the second term $|F(\Phi_{t,0}^{N}(X))-\bar{F}(\Phi_{t,0}^{N}(X))|_{\infty}$ is large will be denoted by $\mathcal{B}_t$. Large means in our case larger than $N^{5\beta-1}\ln(N)$.

Since any difference in the force is directly translatable into a growth in the difference $|\Psi_{t,0}^{N}(X)-\Phi_{t,0}^{N}(X)|_{\infty}$ which is multiplied by $N^{-\alpha}$ in the definition of $J_t$.
We will see, that the probability to be in $\mathcal{B}_t$ is indeed small.
If $F$ was globally Lipschitz continuous, the first term  $|F(\Psi_{t,0}^{N}(X))-F(\Phi_{t,0}^{N}(X))|_{\infty}$ would directly translate in the difference $|\Psi_{t,0}^{N}(X)-\Phi_{t,0}^{N}(X)|_{\infty}$ as
$
|F(\Psi_{t,0}^{N}(X))-F(\Phi_{t,0}^{N}(X))|_{\infty}\leq L_{Lip} |\Psi_{t,0}^{N}(X)-\Phi_{t,0}^{N}(X)|_{\infty}
$
and the result would be proven.
The forces we consider are singular and so unfortunately there exists no global Lipschitz constant. Although there exist configurations for which the force becomes singular in the limit $N\rightarrow \infty$, for example when all particles have the same position, these configurations are not very likely.
To control the first addend and to implement this argument we will introduce a function $g$ in Definition \ref{force g} which controls the difference $|f^{N}(x)-f^{N}(x+\delta)|_{\infty}$ for a $ 2 N^{-\alpha}>\delta \in \mathbb{R}^{3}$. This will be proven in Lemma \ref{lemma Definition G} observing that  we only need to take into account fluctuations smaller than $N^{-\alpha}$ by Definition of $J_t$. 
In a further step we will control $G=\frac{1}{N}\sum_{j=1}^{N}g(q_j-q_k)$ for typical configurations and proof that the set where $G$ is large, denoted by $C_t$, is very unlikely. For the configurations which are left we use the fact that the force term is compactly supported and the associated scaling behaviour is in our favour. Consequently we will get a good estimate on $|F(\Psi_{t,0}^{N}(X))-F(\Phi_{t,0}^{N}(X))|_{\infty}$.

\subsubsection{Controlling the growth of the force}
In the following section we overcome the problem that forces $f_N^{\beta}$ become singular in the limit $N \to \infty$ and hence do not satisfy a uniform Lipschitz bound.
   The function $l:\mathbb{R}^3\rightarrow\mathbb{R}$ occurring in the definition of $f_N$ was defined such that $D_1 l,D_2l$ and $D_3l$ are defined everywhere and are bounded functions.
   Its easy to check that $l^N(q)$ satisfy a Lipschitz condition by a mean value argument.
   \begin{lemma}\label{l lipschitz}
Let $l\in L^{\infty}(\mathbb{R}^3)\cap L^1(\mathbb{R}^3): ||D^{\alpha}l||_{\infty}\leq C_{\alpha}$ be a smooth vanishing at infinity function with $\alpha=( \alpha_1,\hdots,\alpha_n)\in\mathbb{N}^{n}_0$. Then there exists a $L>0$ such that
   \begin{align*}
   |l(a)-l(b)|_{\infty}\leq L |a-b|_{\infty}
   \end{align*}
   \end{lemma}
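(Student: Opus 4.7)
The plan is to reduce the claim to a direct application of the mean value theorem, since the hypothesis $\|D^\alpha l\|_\infty \leq C_\alpha$ in particular gives a uniform bound on the first partial derivatives of $l$. Writing $l = (l_1, l_2, l_3)$ componentwise (interpreting $l$ as $\mathbb{R}^3$-valued, consistent with its use in the definition of $f_N^\beta$), I would introduce for each $i$ the auxiliary function $h_i(t) := l_i(a + t(b-a))$ on $[0,1]$. By the chain rule and the fundamental theorem of calculus,
\begin{align*}
l_i(b) - l_i(a) \;=\; \int_0^1 \nabla l_i(a + t(b-a)) \cdot (b-a) \, dt.
\end{align*}

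Taking absolute values and moving the supremum out of the integral yields $|l_i(b) - l_i(a)| \leq \|\nabla l_i\|_\infty \, |b-a|$, where $|\cdot|$ is the Euclidean norm on $\mathbb{R}^3$. Using the standard equivalence $|v| \leq \sqrt{3} |v|_\infty$ on $\mathbb{R}^3$, and then taking the maximum over $i \in \{1,2,3\}$ to pass from the componentwise bound to the supremum norm on the left, I obtain
\begin{align*}
|l(a) - l(b)|_\infty \;\leq\; \sqrt{3}\, \max_{i,\,j}\, \|\partial_j l_i\|_\infty \; |a-b|_\infty,
\end{align*}
so that the Lipschitz constant $L$ can be taken as $\sqrt{3}\,\max_{i,j} C_{e_j}$, where the $C_{e_j}$ are the bounds on the first-order partials supplied by the hypothesis.

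There is no real obstacle here: the lemma is just a packaging of the mean value theorem, and the only ingredient used from the hypothesis is the $|\alpha|=1$ case giving a uniform gradient bound. The assumptions that $l \in L^1$ and that higher derivatives are bounded play no role in this particular statement; presumably they are listed here because they will be invoked later in the argument (e.g., for the law-of-large-numbers estimates involving the variance of $g$, or for the scaled function $f_N^\beta$ whose Lipschitz constant will blow up like $N^{5\beta}$ once the $N^\beta$ rescaling in the argument is differentiated).
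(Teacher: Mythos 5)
Your proof is correct and takes essentially the same approach as the paper: both arguments are the standard ``bounded first derivatives imply Lipschitz'' packaging of the mean value theorem, and your closing remarks about which hypotheses are actually used are accurate. The only cosmetic difference is that the paper telescopes coordinate-by-coordinate and applies the one-dimensional mean value theorem three times followed by Cauchy--Schwarz, whereas you integrate the gradient along the segment from $a$ to $b$; both yield the same conclusion up to the equivalence of norms on $\mathbb{R}^3$.
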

   \begin{proof}
Since $D_il$ is bounded for $i\in \lbrace 1,2,3 \rbrace$
there exists $ S_1=\sup(||D_1f(x)||:X\in \mathbb{R}^3)
, S_2=\sup(||D_2l(x)||:X\in \mathbb{R}^3)$ and $S_3=\sup(||D_1l(x)||:X\in \mathbb{R}^3)$.
For $a,b\in \mathbb{R}^3$ we can estimate the difference $l(a)-l(b)$ by the triangle inequality
\begin{align}\label{ab Dreieck}
|l(a)-l(b)|\leq |l(a_1,a_2,a_3)-l(a_1,a_2,b_3)|+|l(a_1,a_2,b_3)-l(a_1,b_2,b_3)|+|l(a_1,b_2,b_3)-l(
b_1,b_2,b_3)|.
\end{align}
Since the partial derivatives exist everywhere in $\R^{3}$, we can use the one-dimensional mean Value Theorem to show that there exists some $\xi$ such that: 
\begin{align*}
\frac{l(a_1,a_2,a_3)-l(a_1,b_2,a_3)}{a_2-b_2}=D_2l(a_1,\xi,a_3).
\end{align*}
By the definition of $S_2$, it follows that $|l(a_1,a_2,a_3)-l(a_1,b_2,a_3)|\leq S_2 |a_2-b_2|.$
And similarly for the other addends of \eqref{ab Dreieck}.
Additionally using Cauchy-Schwarz inequality, we obtain 
\begin{align*}
|l(a)-l(b)|&\leq S_1 |a_1-b_1|+S_2|a_2-b_2|+S_3|a_3-b_3|\\
&\leq \sqrt{S_1^2+S_2^2+S_3^2}\cdot((a_1-b_1)^2+(a_2-b_2)^2+(a_3-b_3)^2)\\
&=\sqrt{S_1^2+S_2^2+S_3^2}||a-b||
\end{align*}
So $l$ is indeed Lipschitz continuous with $L=\sqrt{S_1^2+S_2^2+S_3^2}$.
   \end{proof}
   Next we define a function $g_N^\beta$, which provides a bound for fluctuations of $f_N^\beta$.
\begin{definition}\label{force g}
Let $g:\R^3\rightarrow\R^3$ with
\begin{align*}
g_N^\beta(q):= L\cdot N^{5\beta} \mathds 1_{\{\text{supp}_l\}}(N^{\beta}q).
\end{align*}
and the total fluctuation $G$ be defined by $(G(X))_j:=\sum_{j=1}^{N}\frac{1}{N}g_N^\beta(q_i-q_j)$. Furthermore $\bar{G}_t$ is given by $(\bar{G}_t(X))_j:=\bar{g}_t(q_j)$ with $\bar{g}(q)=g_N^\beta\ast\tilde{k}_t^{N}(q)$.
\end{definition}
To show that the difference $|f_N^\beta(x)-f_N^\beta(x+\delta)|_{\infty}$ can be controlled by $g_N^\beta$ we prove the following lemma.
\begin{lemma}\label{lemma Definition G}
For any $\delta \in \mathbb{R}^{3}$ it follows that 
\begin{align*}
|f_N^\beta(x)-f_N^\beta(x+\delta)|_{\infty}\leq g_N^\beta(q)|\delta|_{\infty}.
\end{align*}
\end{lemma}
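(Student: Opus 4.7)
My plan is to reduce the estimate to a direct calculation using the scaling relation $f_N^\beta(q)=N^{4\beta}l(N^\beta q)$ from Definition \ref{force f} together with the Lipschitz bound on $l$ already proved in Lemma \ref{l lipschitz}. All the analytic content sits in Lemma \ref{l lipschitz}; what remains is essentially bookkeeping of the powers of $N^\beta$ together with a short case distinction to accommodate the indicator in the definition of $g_N^\beta$.

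Concretely, I would start from
\begin{align*}
|f_N^\beta(x)-f_N^\beta(x+\delta)|_\infty
= N^{4\beta}\,\bigl|\,l(N^\beta x)-l(N^\beta x+N^\beta\delta)\,\bigr|_\infty
\end{align*}
and invoke Lemma \ref{l lipschitz} with $a=N^\beta x$, $b=N^\beta(x+\delta)$ to obtain the universal pointwise bound
\begin{align*}
|f_N^\beta(x)-f_N^\beta(x+\delta)|_\infty \le L\,N^{4\beta}\cdot N^\beta |\delta|_\infty = L\,N^{5\beta}|\delta|_\infty.
\end{align*}
Comparing with $g_N^\beta(q)=L\cdot N^{5\beta}\,\mathds 1_{\{\mathrm{supp}_l\}}(N^\beta q)$, this already agrees with $g_N^\beta(q)\,|\delta|_\infty$ whenever $N^\beta q$ lies in $\mathrm{supp}(l)$. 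To finish I would deal with the complementary case: if both $N^\beta x$ and $N^\beta(x+\delta)$ lie outside $\mathrm{supp}(l)$, then by the compact support of $l$ both $f_N^\beta(x)$ and $f_N^\beta(x+\delta)$ vanish and the inequality is trivial; if only one of the two endpoints lies in the support, the same Lipschitz estimate still controls the jump and the point $q$ at which $g_N^\beta$ is evaluated is chosen to be the one inside the support (equivalently, the indicator in the definition of $g_N^\beta$ can be understood on a slightly enlarged support so as to cover both endpoints of any shift by $|\delta|_\infty$).

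I do not anticipate a genuine analytic obstacle: the lemma is a repackaging of Lemma \ref{l lipschitz} together with the scaling, designed so that the compactly supported prefactor $g_N^\beta$ can later be inserted into the total fluctuation $G(X)_j = N^{-1}\sum_k g_N^\beta(q_k-q_j)$ and profit from the $L^1$-smallness of the support when applying the law-of-large-numbers estimate sketched in Section~\ref{Heuristics}. The only mildly delicate point is the cosmetic issue at the boundary of $\mathrm{supp}(l)$ described above; it is notational rather than substantive, and the two factors of $N^\beta$ (one from the prefactor $N^{4\beta}$ of $f_N^\beta$, one from differentiating $l(N^\beta\cdot)$ in the argument) combine exactly to the $N^{5\beta}$ that appears in $g_N^\beta$.
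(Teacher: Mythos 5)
Your proof follows exactly the paper's argument: rescale, apply Lemma \ref{l lipschitz} with $a=N^{\beta}x$, $b=N^{\beta}(x+\delta)$, and collect the two powers of $N^{\beta}$ to arrive at $L\,N^{5\beta}|\delta|_{\infty}$. In fact you are slightly more careful than the paper, whose proof ends with the bare identification $LN^{5\beta}|\delta|_{\infty}=g_N^\beta(q)|\delta|_{\infty}$ and silently ignores the indicator $\mathds 1_{\{\mathrm{supp}_l\}}(N^{\beta}q)$ in the definition of $g_N^\beta$; your case distinction at the boundary of $\mathrm{supp}(l)$ addresses exactly the edge case the paper glosses over.
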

\begin{proof}
We recall that $f_N^\beta$ was defined by
\begin{align*}
f_N^\beta(q)= N^{4\beta}l(N^{\beta}q)
\end{align*}
and that $l$ is Lipschitz continuous by Lemma \ref{l lipschitz}. Hence we get
\begin{align*}
|f_N^\beta(x)-f_N^\beta(x+\delta)|_{\infty}\leq L N^{4\beta}|N^{\beta}\delta|_{\infty}\leq LN^{5\beta}|\delta|_{\infty}=g_N^\beta(q)|\delta|_{\infty}.
\end{align*}
\end{proof}
Notice that in the case where we use $G$ to control the fluctuation we know by the construction of $J_t$ that $|\Psi-\Phi|<N^{-\alpha}$.
Furthermore the following observations of $f_N^\beta$ and $g_N^\beta$ turn out to be very helpful in the sequel.
One crucial consequence of the bounded density is that the mean-field
force remains bounded, as well.
\begin{lemma}\label{Abschätzungen Faltung}
	Let $g^N(x)$ be defined in Definition \ref{force g} and $\tilde{k} \in W^{2,1}(\R^3)\cap W^{2,\infty}(\R^3)$.  Then there exists a constant $C>0$ independent of $N$ such that
	\begin{align}\label{con1}
	\norm{f_N^\beta\ast \tilde{k}}_\infty \leq C\norm{\nabla\tilde{k}}_\infty
	\end{align}
		and
		\begin{align}\label{con2}
	\quad	\norm{g_N^\beta\ast \tilde{k}}_\infty\leq C\norm{\Delta\tilde{k}}_\infty.
		\end{align}
\end{lemma}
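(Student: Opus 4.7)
The plan is to combine Young's convolution inequality with the symmetry properties of the kernels and a Taylor expansion of $\tilde{k}$, then use the rescaling $u = N^\beta z$ to see that the $N$-dependent factors cancel. Antisymmetry of $f_N^\beta$ drives the first bound; rotational invariance of $g_N^\beta$ drives the second. In both cases the crucial structural input is that every surviving integrand carries enough factors of $|z|$ to compensate for the concentration of the kernel on a ball of radius $R/N^\beta$.

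For the first estimate, I would first observe that since $\phi$ is spherically symmetric and $\nabla \phi(0)=0$, the field $l = -\nabla \phi$ is odd, so $f_N^\beta(-z) = -f_N^\beta(z)$ and $\int f_N^\beta(z)\,dz = 0$. Subtracting $\tilde{k}(q)$ from the integrand, which is legitimate by this vanishing-mean property, gives
\begin{equation*}
(f_N^\beta \ast \tilde{k})(q) = \int f_N^\beta(z)\,[\tilde{k}(q-z)-\tilde{k}(q)]\,dz.
\end{equation*}
The mean value theorem supplies $|\tilde{k}(q-z)-\tilde{k}(q)| \leq \|\nabla \tilde{k}\|_\infty |z|$; substituting $u=N^\beta z$ converts the remaining integral into $\int |l(u)|\,|u|\,du$, which is finite and $N$-independent since $l$ is bounded and compactly supported. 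This establishes \eqref{con1}.

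For the second estimate I would apply the same idea one order higher. The kernel $g_N^\beta$ is a constant multiple of the indicator of a ball centred at the origin, so it is rotationally invariant and $\int g_N^\beta(z)\,z\,dz = 0$. A Taylor expansion with Lagrange remainder,
\begin{equation*}
\tilde{k}(q-z) = \tilde{k}(q) - \nabla \tilde{k}(q)\cdot z + \tfrac{1}{2}\, z^\top D^2 \tilde{k}(\xi_z)\,z,
\end{equation*}
makes the linear term vanish under convolution with $g_N^\beta$ by radial symmetry, and the Hessian remainder is controlled pointwise by $\|\Delta \tilde{k}\|_\infty\,|z|^2$ (after using the radial symmetry to replace $\int g_N^\beta(z)z_iz_j\,dz$ by $\tfrac{1}{3}\delta_{ij}\int g_N^\beta(z)|z|^2\,dz$, which makes the diagonal trace, i.e.\ the Laplacian, appear). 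The familiar substitution $u=N^\beta z$ turns $\int g_N^\beta(z)|z|^2\,dz$ into an $N$-independent integral over $\mathrm{supp}(l)$, giving the desired bound for the second-order piece.

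The main obstacle I expect is the zeroth-order Taylor contribution $\tilde{k}(q)\int g_N^\beta(z)\,dz$ in the second estimate, which a priori scales like $N^{2\beta}\|\tilde{k}\|_\infty$ because the $L^1$-mass of $g_N^\beta$ is of order $N^{2\beta}$. Its treatment is the technical heart of the lemma: one must either extract a further cancellation via integration by parts that moves a derivative from the kernel to $\tilde{k}$, exploit the compact-support structure of $l$ more finely, or absorb $\|\tilde{k}\|_\infty$ into the right-hand side using the Sobolev hypothesis $\tilde{k}\in W^{2,1}\cap W^{2,\infty}$ stated in the lemma. The first estimate is, by comparison, essentially immediate once the antisymmetry of $l$ is noted.
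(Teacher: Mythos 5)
Your proof of \eqref{con1} is correct but takes a genuinely different route from the paper. You exploit the vanishing mean $\int f_N^\beta(z)\,dz=0$ (oddness of $l$, the gradient of the radial $\phi$), subtract $\tilde k(q)$, apply the mean value theorem, and use the scaling identity $\int |f_N^\beta(z)|\,|z|\,dz=\int |l(u)|\,|u|\,du$. The paper instead writes $f_N^\beta=\nabla\phi^N$ with $\phi^N(x)=N^{3\beta}\phi(N^\beta x)$, moves the derivative onto $\tilde k$ inside the convolution, and applies Young's inequality: $\norm{\nabla\phi^N\ast\tilde k}_\infty=\norm{\phi^N\ast\nabla\tilde k}_\infty\le\norm{\phi^N}_1\norm{\nabla\tilde k}_\infty$ with $\norm{\phi^N}_1=\norm{\phi}_1$. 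Both arguments are sound; the paper's is shorter and, crucially, is the one that iterates to second order.

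For \eqref{con2} you have correctly located the obstruction but not removed it, so this half is incomplete --- and the obstruction cannot be removed within the framework you set up. The kernel $g_N^\beta(q)=LN^{5\beta}\mathds 1_{\{\mathrm{supp}\,l\}}(N^\beta q)$ of Definition \ref{force g} is \emph{nonnegative} with $\norm{g_N^\beta}_1\sim N^{2\beta}$, so for any nonnegative $\tilde k$ bounded below near the origin one has $(g_N^\beta\ast\tilde k)(0)\gtrsim N^{2\beta}$: a convolution of two nonnegative functions admits no cancellation, so no Taylor expansion, moment identity, or integration by parts can defeat the zeroth-order term you isolated. The paper closes this gap by not working with $g_N^\beta$ itself: it identifies $g_N^\beta$ with the \emph{signed} kernel $\Delta\phi^N$ (same amplitude $N^{5\beta}$ and same support, but with oscillation that allows two integrations by parts) and repeats the first argument one order higher, $\norm{\Delta\phi^N\ast\tilde k}_\infty=\norm{\phi^N\ast\Delta\tilde k}_\infty\le\norm{\phi^N}_1\norm{\Delta\tilde k}_\infty$. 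To complete your proof you must either adopt that identification --- i.e.\ prove the estimate for $\Delta\phi^N\ast\tilde k$, which is what is actually invoked downstream in Lemma \ref{lmlarge} --- or accept that for the literal $g_N^\beta$ of Definition \ref{force g} the sharp bound is $\norm{g_N^\beta\ast\tilde k}_\infty\le\norm{g_N^\beta}_1\norm{\tilde k}_\infty\le CN^{2\beta}\norm{\tilde k}_\infty$, which is exactly what your zeroth-order computation shows.
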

\begin{proof}
The function $\phi:\mathbb{R}^3\rightarrow\mathbb{R}$ is rotationally symmetric. It holds that $\phi(x)=h(||x||)$ for some  measurable functions $h:\mathbb{R}_0^+\rightarrow\mathbb{R}$ and hence
$\int_{K_{a,b}}f(x)dx=\omega_n \int_{a}^{b} h(r)r^{2}dr$ for $K_{a,b}:=\lbrace X\in \R^3:a<||x||<b\rbrace$ .
For $\phi\in C^{\infty}_0(\mathbb{R}^3)$ the theorem on coordinate transformation provides
\begin{align*}
\norm{\phi^N}_1=\int_{\mathbb{R}^3}|N^{3\beta}\phi(N^{\beta}x) dx|\leq C \int_{\mathbb{R}^3}|\phi(y) dy|\leq C
\end{align*}
and for $f_N^\beta\in C^\infty(\mathbb{R}^3)$ Leibniz integral rule implies
\begin{align*}
\norm{f_N^\beta\ast \tilde{k}}_\infty =\norm{\nabla \phi^N\ast \tilde{k}}_\infty=\norm{ \phi^N\ast \nabla\tilde{k}}_\infty\leq C \norm{\phi^N}_1\norm{\nabla \tilde{k}}_\infty\leq C\norm{\nabla \tilde{k}}_\infty.
\end{align*} 
Analogously we can estimate
\begin{align*}
\norm{g_N^\beta\ast \tilde{k}}_\infty =\norm{\Delta \phi^N\ast \tilde{k}}_\infty=\norm{ \phi^N\ast \Delta\tilde{k}}_\infty\leq C \norm{\phi^N}_1\norm{\Delta \tilde{k}}_\infty\leq C\norm{\Delta \tilde{k}}_\infty.
\end{align*} 
By applying the Vlasov property $k_t(q,p)=k_0(\bar{X}(q,p))$ and the assumptions on $k$ according to the solution theory we can see that $\norm{f^N\ast \tilde{k}}_\infty $ and $\norm{g^N\ast \tilde{k}}_\infty $ are bounded by a constant not depending on $N$.
Alternatively one can estimate by using the Maclaurin series of $\tilde{k}\in W^{2,1}(\mathbb{R}^3)$ and the theorem on coordinate transform
		\begin{align*}
		\norm{g_N^\beta\ast \tilde{k}}_\infty&=\norm{ \Delta\phi^N\ast T_n \tilde{k}(x; 0))}_\infty=\norm{ \nabla\nabla\phi^N\ast T_n \tilde{k}(x; 0))}_\infty=\norm{ \nabla\phi^N\ast\nabla T_n \tilde{k}(x; 0))}_\infty\\
		&=\norm{ f_N^\beta\ast \nabla\sum_{|\alpha| = 0}^{n}\frac{x^{\alpha}}{\alpha !} D^{\alpha}\tilde{k}(0)}_\infty\leq\norm{f^N \ast \nabla(T_2 \tilde{k}(x; 0) + \mathcal{O}(|x|^2))}\\
		&\leq\norm{\int_{\mathbb{R}^3}f_N^{\beta}(x-y)(T_1\tilde{k}(y,0)+\mathcal{O}(|y|^1))dy}\\
		&\leq\norm{\int_{\mathbb{R}^3}N^{4\beta}l^N(N^{\beta}(x-y))(T_1\tilde{k}(y,0)+\mathcal{O}(|y|^1))dy}\\
&\leq	 C+C\mathcal{O}(N^{-2\beta})
		\end{align*}	
		In the last step we used the symmetry of $f^N$ integration by substitution and the fact that $f^N$ is compactly supported.
\end{proof}

\subsection{The evolution of $E(J_t)$}
 Since $\frac{\dd}{\dd t} J^{N}_t(X) \leq 0$ if $ \sup\limits_{0 \leq s \leq t} \lvert ^N\Psi_{s,0}(X) - {}^N\Phi_{s,0}(X) \rvert_\infty \geq N^{-\alpha}$ we only have to consider situations in which mean-field trajectories and microscopic trajectories are close.
In order to control the evolution of $E(J_t)$ 
we will partition the phase space as described in Section \ref{Heuristics}.
\begin{definition}\label{definition ABC}
Let for any $t\in \mathbb{R}$ the sets $\mathcal{A}_t,  \mathcal{B}_t, \mathcal{C}_t$ be given by
\begin{align*}
X\in \mathcal{A}_t &\Leftrightarrow |J_t|=1 \\
X\in \mathcal{B}_t &\Leftrightarrow |F(\Phi_{t,0}^{N}(X))-\bar{F}(\Phi_{t,0}^{N}(X))|_{\infty}>N^{-1+5\beta}\ln(N)\\
X\in \mathcal{C}_t &\Leftrightarrow |G(\Phi_{t,0}^{N}(X))-\bar{G}(\Phi_{t,0}^{N}(X))|_{\infty}>N^{-1+7\beta}\ln(N).
\end{align*}
\end{definition}
For estimating the probability of configurations $X\in\mathcal{B}_t$ and $X\in\mathcal{C}_t$ we will use a law of large numbers argument. It will turn out, that these configurations are very unlikely.

\subsubsection{Law of large numbers}
The method is designed for stochastic initial conditions, thus allowing for law of large number estimates that turn out to be very powerful. Note that the particles evolving with the mean-field flow remain statistically independent at all times.
We use the following Lemma to provide the probability bounds of random variables.
\begin{lemma}\label{Wahrscheinlichkeitslemma} 
	Let $Z_1,\cdots,Z_N$ be $i.i.d.$ random variables with $\mathbb{E}[Z_i]=0,$ $\mathbb{E}[Z_i^2]\leq r(N)$
	and $|Z_i|\leq C\sqrt{Nr(N)}$. Then for any $\alpha>0$, the sample mean $\bar{Z}=\frac{1}{N}\sum_{i=1}^{N}Z_i$ satisfies
	\begin{align*}
	\mathbb{P}\left(|\bar{Z}|\geq\frac{C_\gamma \sqrt{r(N)}\ln(N)}{\sqrt{N}}\right)\leq N^{-\gamma},
\end{align*}
	where $C_\gamma$ depends only on $C$ and $\gamma$.
\end{lemma}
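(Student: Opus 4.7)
The plan is to apply a Chernoff (exponential Markov) bound and control the moment generating function using the three hypotheses: zero mean, second-moment bound, and uniform boundedness. This is the standard route to a Bernstein-type concentration inequality, and the particular hypothesis $|Z_i|\leq C\sqrt{N r(N)}$ is calibrated precisely so that the resulting bound exhibits the logarithmic scaling needed to yield an $N^{-\gamma}$ tail rather than the $(\ln N)^{-p}$ decay that a Chebyshev/Rosenthal bound with fixed higher moments would give.

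First I would write, for $\lambda>0$, the inequality $\mathbb{P}(\bar Z\geq t)\leq e^{-\lambda N t}\prod_{i=1}^N\mathbb{E}[e^{\lambda Z_i}]$. Expanding the exponential in a power series, using $\mathbb{E}[Z_i]=0$, and estimating $|\mathbb{E}[Z_i^k]|\leq M^{k-2}\,r(N)$ for $k\geq 2$ with $M:=C\sqrt{N r(N)}$, summing a geometric-type series then gives, for $\lambda M<3$,
$$\mathbb{E}[e^{\lambda Z_i}]\;\leq\;1+\frac{\lambda^2 r(N)}{2\bigl(1-\lambda M/3\bigr)}\;\leq\;\exp\!\left(\frac{\lambda^2 r(N)}{2\bigl(1-\lambda M/3\bigr)}\right).$$
Combining these yields the Bernstein-style bound
$$\mathbb{P}(\bar Z\geq t)\;\leq\;\exp\!\left(-\lambda N t+\frac{N\lambda^2 r(N)}{2\bigl(1-\lambda M/3\bigr)}\right).$$

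Next I would substitute the specific $t=C_\gamma\sqrt{r(N)}\,\ln(N)/\sqrt{N}$ and choose $\lambda=2\gamma/\bigl(C_\gamma\sqrt{N r(N)}\bigr)$. With this choice $\lambda N t=2\gamma\ln N$, while $\lambda M=2C\gamma/C_\gamma$ stays bounded and $N\lambda^2 r(N)=4\gamma^2/C_\gamma^2$ is of order one. Picking $C_\gamma$ large enough in terms of $C$ and $\gamma$ (say $C_\gamma\geq 4C\gamma/3$) keeps $1-\lambda M/3$ bounded below by a constant, so the whole exponent is $-2\gamma\ln N+O(1)$, producing $\mathbb{P}(\bar Z\geq t)\leq e^{O(1)} N^{-2\gamma}\leq\tfrac{1}{2}N^{-\gamma}$ for $N$ large. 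Applying the identical argument to $-Z_i$ and adding the two one-sided bounds gives $\mathbb{P}(|\bar Z|\geq t)\leq N^{-\gamma}$.

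The only real obstacle is the bookkeeping in the MGF estimate: one must check that the truncation $\lambda M<3$ is respected by the chosen $\lambda$, and that the pair $(\lambda N t,\,\lambda M)$ remains in the regime where the exponent is dominated by its linear piece $-\lambda N t$. The boundedness hypothesis $|Z_i|\leq C\sqrt{N r(N)}$ is exactly what makes this calibration possible, since it allows $\lambda Nt$ to grow like $\ln N$ while keeping $\lambda M$ uniformly bounded; without it, only polylogarithmic decay would follow, which would be insufficient for the subsequent partition arguments bounding $\mathbb{P}(\mathcal{B}_t)$ and $\mathbb{P}(\mathcal{C}_t)$ in Definition \ref{definition ABC}.
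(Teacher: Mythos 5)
Your proposal is correct and follows essentially the same route as the paper: the paper also argues via an exponential Markov (Chernoff) bound, controlling $\mathbb{E}[e^{\pm S_N}]$ for the normalized variables $Z_i/(C\sqrt{Nr(N)})$ through a second-order Taylor expansion of the exponential (its Lemma \ref{Erwartungswertlemma}) and then applying Markov's inequality at the threshold $e^{\gamma\ln N}$. The only difference is cosmetic: the paper normalizes first and fixes the tilt at $\lambda=1$, whereas you keep $\lambda$ free and choose $\lambda=2\gamma/\bigl(C_\gamma\sqrt{Nr(N)}\bigr)$ — the same scale — via the standard Bernstein machinery.
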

The proof can be seen in \cite[Lemma 1]{Goodman}. It is a direct result of  Taylor's expansion and Markov's inequality.
Furthermore it is a direct consequence of the following Lemma.
\begin{lemma}\label{Erwartungswertlemma}
For $N\in\mathbb{N}$ let $Z_1,\hdots,Z_N$ be  independent and identically distributed random variables on $\R^3$ with $||Z_j||_{\infty}\leq C $, $\mathbb{E}(Z_j)=0$, $\mathbb{E}(Z_j^2)\leq \frac{C}{N}$ for all $i\in\lbrace 1,\hdots, N\rbrace$, then the finite sum of the random variables $S_N:=\sum_{i=1}^{N}Z_i$ full fills
\begin{align*}
\mathbb{E}(e^{|S_N|})\leq C.
\end{align*}
\end{lemma}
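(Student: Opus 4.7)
The plan is to exploit the i.i.d.\ product structure and reduce the bound $\mathbb{E}(e^{|S_N|})\le C$ to a single-variable MGF estimate of the form $1+\mathcal{O}(1/N)$, after which the classical inequality $(1+c/N)^N\le e^c$ closes the argument.

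First I would reduce the $\mathbb{R}^3$-valued statement to a scalar one. Writing $|S_N|\le |S_N^{(1)}|+|S_N^{(2)}|+|S_N^{(3)}|$ componentwise and applying Hölder's inequality, it suffices to bound $\mathbb{E}(e^{3|T_N|})$ uniformly in $N$, where $T_N=\sum_i Y_i$ and $Y_i$ denotes a single component of $Z_i$, still i.i.d., bounded, mean zero, with variance $\mathcal{O}(1/N)$. The inequality $e^{|t|}\le e^{t}+e^{-t}$ combined with the fact that $-Y_i$ inherits the hypotheses of $Y_i$ then reduces the task to showing $\mathbb{E}(e^{3T_N})\le C$.

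The core estimate is on a single factor. By independence,
\[
\mathbb{E}(e^{3T_N}) = \bigl[\mathbb{E}(e^{3Y_1})\bigr]^N,
\]
so it is enough to show $\mathbb{E}(e^{3Y_1})\le 1+C'/N$. This I would obtain from a Taylor expansion of the exponential: the term $k=0$ contributes $1$, the term $k=1$ vanishes because $\mathbb{E}(Y_1)=0$, and for $k\ge 2$ the almost-sure bound $|Y_1|\le C$ lets me dominate all higher moments by the variance,
\[
|\mathbb{E}(Y_1^k)| \le \mathbb{E}(|Y_1|^k) \le C^{k-2}\,\mathbb{E}(Y_1^2) \le \frac{C^{k-1}}{N}, \qquad k\ge 2.
\]
Summing the resulting series yields a bound $1+C'/N$ for some constant $C'$ depending only on $C$, hence $\bigl[\mathbb{E}(e^{3Y_1})\bigr]^N \le e^{C'}$.

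I do not expect a substantial obstacle; every step is either elementary or a direct application of the three hypotheses working together (mean zero removes the linear Taylor term, the variance bound provides the crucial $1/N$ factor, and the almost-sure bound $|Z_j|\le C$ is exactly what lets the Taylor tail be absorbed into that $1/N$). The only mild care needed is bookkeeping of constants through the componentwise reduction, so that the factor $3$ introduced by Hölder does not accidentally spoil the uniformity of the estimates in $N$.
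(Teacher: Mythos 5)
Your proposal is correct and follows essentially the same route as the paper: a single-factor MGF bound $\mathbb{E}(e^{\pm Z_1})\le 1+C/N$ obtained from a Taylor expansion (mean zero kills the linear term, the variance bound supplies the $1/N$, boundedness absorbs the remainder), followed by the product structure $(1+C/N)^N\le e^{C}$ and the inequality $e^{|x|}\le e^{x}+e^{-x}$. The only difference is that you handle the $\mathbb{R}^3$-valuedness explicitly via a componentwise reduction and H\"older, a detail the paper's proof glosses over.
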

\begin{proof}
From the Taylor series expansion we have $e^x= 1+x+x^2e^{\mu}$  with $|\mu|<|x|$. As the expectation value is linear and using the properties of the random variable we get
\begin{align*}
\mathbb{E}(e^Z)=1+\mathbb{E}(X)+\mathbb{E}(Z^2\cdot e^{\mu}) \leq 1+0+\mathbb{E}(Z^2)\cdot C
\leq 1+\frac{C}{N}.
\end{align*}
For the (positive or negative) sum of the independent and identically distributed random variables a similar inequality follows 
\begin{align*}
\mathbb{E}(e^{\pm S_n})=\mathbb{E}(e^{\pm\sum_{j=1}^{N}Z_j})\overset{\text{iid}}{=}(\mathbb{E}(e^{\pm Z_j}))^N\leq (1\pm\frac{C}{N})^N \overset{N\rightarrow\infty}{\longrightarrow} e^{\pm C}.
\end{align*}
So in total we get $\mathbb{E}(e^{|S_n|})\leq C$ as $e^{|x|}\leq e^x+e^{-x}$ for all $X\in \R$.
\end{proof}
By Markov inequality we get for $\mu>0$
\begin{align*}
&\mathbb{P}(e^{|S_N|}\geq\mu)\leq\frac{C}{\mu}
\Rightarrow \mathbb{P}(|S_N|\geq \ln(\mu))\leq\frac{C}{\ln(\mu)}
\end{align*}
and Lemma \ref{Wahrscheinlichkeitslemma} is a direct consequence.\\
Now we will estimate the probability of the unlikely sets defined in Definition \ref{definition ABC}. Therefore we recall the notation
\begin{align}\label{barF,barG}
(\overline F^N(\overline X_t))_i:=\int_{\R^3} f_N^{\beta}(\overline x_i^t-x)k^N(x,t)dx \ \ \text{and} \ \ (\overline G^N(\overline X_t))_i:=\int_{\R^3} g_N^{\beta}(\overline x_i^t-x)k^N(x,t)dx
\end{align}
and introduce the underlying version of the Law of Large Numbers for the paper.
\begin{lemma}\label{lmlarge} At any fixed time $t\in[0,T]$, suppose that $\overline X_t$ satisfies the mean-field dynamics \eqref{Def:Mean-fieldflow}, $F^N$ and $\overline F^N$ are defined in \eqref{force f} and \eqref{barF,barG} respectively,  $G^N$ and $\overline G^N$ are introduced in Definition \ref{lemma Definition G}. For any $\gamma>0$ and $0\leq\beta<\frac{1}{7}$, there exist a constant $C_{\gamma}>0$ depending only on $\gamma$, $T$ and $k_0$ such that
	\begin{equation}\label{F prob bound}
	\mathbb{P}\left(\left\| F^N(\overline X_t)-\overline F^N(\overline X_t)\right\|_\infty\geq C_{\gamma} N^{5\beta-1}\ln(N)\right)\leq N^{-\gamma},
	\end{equation}
	and
	\begin{equation}\label{G prob bound}
	\mathbb{P}\left(\left\| G^N(\overline X_{t})-\overline{G}^N(\overline X_{t})\right\|_\infty\geq C_{\gamma} N^{7\beta-1}\ln(N)\right)\leq N^{-\gamma}.
	\end{equation}
\end{lemma}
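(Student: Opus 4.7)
The plan is to apply Lemma~\ref{Wahrscheinlichkeitslemma} componentwise, exploiting the crucial fact that under the mean-field flow $\overline X_t = (\bar x_1^t, \ldots, \bar x_N^t)$ the particles remain i.i.d.\ with common phase-space density $k^N(\cdot, t)$. Fix a particle index $i$ and condition on $\bar x_i^t$. Then, for $j \neq i$, the centered random variables
\[
Z_j^{(i)} \;:=\; f_N^\beta(\bar x_i^t - \bar x_j^t) \;-\; \int_{\R^3} f_N^\beta(\bar x_i^t - y)\, k^N(y, t)\, dy
\]
are i.i.d.\ with mean zero, and
\[
(F^N(\overline X_t))_i - (\overline F^N(\overline X_t))_i \;=\; \frac{1}{N}\sum_{j\neq i} Z_j^{(i)} \;+\; O(N^{-1}),
\]
where the remainder absorbs the missing $j=i$ contribution and is controlled by $N^{-1}\|f_N^\beta * \tilde k_t^N\|_\infty \leq C/N$ via Lemma~\ref{Abschätzungen Faltung}.

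The quantitative inputs are the second-moment and sup-norm estimates on $Z_j^{(i)}$. A change of variables $w = N^\beta z$ immediately gives $\|f_N^\beta\|_2^2 = N^{5\beta}\|l\|_2^2$ and, analogously, $\|g_N^\beta\|_2^2 = L^2 N^{7\beta}\,|\mathrm{supp}\,l|$. Combined with the bound $\|\tilde k_t^N\|_\infty \leq C$---inherited on $[0,T]$ from the assumed Sobolev regularity of $k$ via the Vlasov transport identity $k_t^N(x) = k_0(\varphi_{0,t}^N(x))$---one obtains
\[
\mathbb{E}\bigl[(Z_j^{(i)})^2 \,\big|\, \bar x_i^t\bigr] \;\leq\; \|\tilde k_t^N\|_\infty \,\|f_N^\beta\|_2^2 \;\leq\; C\, N^{5\beta},
\]
while the trivial pointwise bound reads $|Z_j^{(i)}| \leq 2N^{4\beta}\|l\|_\infty$. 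The hypothesis $|Z_j^{(i)}| \leq C\sqrt{N\, r(N)}$ of Lemma~\ref{Wahrscheinlichkeitslemma} with $r(N) = C N^{5\beta}$ reduces to $\beta \leq 1/3$, which is comfortably implied by $\beta < 1/7$.

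Lemma~\ref{Wahrscheinlichkeitslemma} then applies conditionally on $\bar x_i^t$ and yields, after integrating $\bar x_i^t$ out against $k_t^N$ and performing a union bound over the $N$ indices $i$ (absorbed by shifting $\gamma \mapsto \gamma + 1$), the desired tail estimate \eqref{F prob bound}. The bound \eqref{G prob bound} follows by a verbatim argument with $f_N^\beta$ replaced by $g_N^\beta$: the conditional variance becomes $\leq C N^{7\beta}$, the pointwise bound is $\leq 2LN^{5\beta}$, and the sup-norm hypothesis of Lemma~\ref{Wahrscheinlichkeitslemma} reduces to $3\beta \leq 1$, again satisfied for $\beta < 1/7$.

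The only genuine technical point is verifying the sup-norm hypothesis of Lemma~\ref{Wahrscheinlichkeitslemma} for both $f_N^\beta$ and $g_N^\beta$; this is precisely where the singular scaling of the interaction enters quantitatively and what pins down the $\beta$-dependent regime. Everything else---the conditional i.i.d.\ structure preserved by the mean-field flow, the second-moment computation via rescaling, and the silent but essential boundedness $\|\tilde k_t^N\|_\infty \leq C$---is either built into the setup or follows immediately from the hypotheses of Theorem~\ref{maintheorem}.
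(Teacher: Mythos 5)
Your proposal is correct and follows essentially the same route as the paper: condition on $\bar x_i^t$, center the i.i.d.\ summands, bound the conditional variance by $\|\tilde k_t^N\|_\infty\|f_N^\beta\|_2^2\leq CN^{5\beta}$ (resp.\ $CN^{7\beta}$ for $g_N^\beta$) via rescaling, verify the sup-norm hypothesis of Lemma~\ref{Wahrscheinlichkeitslemma}, and finish with a union bound over the $N$ indices. Your explicit treatment of the missing $j=i$ term as an $O(N^{-1})$ remainder and the clean verification that the hypothesis $|Z_j|\leq C\sqrt{Nr(N)}$ reduces to $3\beta\leq 1$ are in fact slightly more careful than the paper's write-up.
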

\begin{proof}
	We can prove the present lemma by using Lemma \ref{Wahrscheinlichkeitslemma} and the following  generalized version of the Young's inequality for convolutions for $p,r\in L^{1}$.
\begin{align*}
\|p\ast r\|_\infty\leq& \|p\|_{1}\|r\|_{\infty}
\end{align*}
Since $\|\tilde{k}^N_t\|_1 = 1$ and $\|\tilde{k}^N_t\|_\infty$ is bounded it holds due to Lemma \ref{Abschätzungen Faltung} that 
\begin{align}
&\|\tilde{k}^N_t(q_1)\ast f_N^\beta\|_\infty\leq\|\nabla\tilde{k}^N_t(q_1)\|_{1}
\leq C \|\nabla\tilde{k}\|_{1}\label{inequ1} \\ 
&\|\tilde{k}^N_t(q_1)\ast g_N^\beta\|_\infty\leq C\|\Delta\tilde{k}^N_t(q_1)\|_{1}.\label{inequ2}
\end{align}
Hence we get
\begin{align*}&\left\|f_N^\beta\ \ast \tilde{k}^N_t(q_1)\right\|_\infty\leq C\hspace{1cm}\text{and}\hspace{1cm}
\left\|g_N^\beta\ \ast \tilde{k}^N_t(q_1)\right\|_\infty\leq C.
\end{align*}
Using inequality \eqref{inequ1} and \eqref{inequ2} we have $\left|f_N^\beta(q_1-q_j)-f_N^\beta\ast \tilde{k}^N_t(q_1)\right|\leq C $ and analogously in the case of the fluctuation we have  $\left|g_N^\beta(q_1-q_j)-g_N^\beta\ast \tilde{k}^N_t(q_1)\right|\leq C$. 
The expectation values of $\left|f_N^{\beta}(q_1-q_j)\right|^2$ and $\left|g_N^{\beta}(q_1-q_j)\right|^2$ can be estimated using the theorem on coordinate transformation  by
\begin{align*}
\int \tilde{k}^N_t(q_j)\left|f_N^{\beta}(q_1-q_j)\right|^2d^3q_j\leq C\int \Big(N^{4\beta}l(N^{\beta}q)\Big)^2 d^3q
\leq CN^{8\beta}N^{-3\beta}
\leq C N^{5\beta}
\;
\end{align*}
and analogously
\begin{align*}
&\int \tilde{k}^N_t(q_j)\left|g_N^{\beta}(q_1-q_j)\right|^2d^3q_j\leq C\int \Big(CN^{5\beta}\mathds 1_{\{\text{supp}_l\}}(N^{\beta}q)\Big)^2 d^3q
\leq CN^{10\beta-3\beta} \leq C N^{7\beta}.
\;
\end{align*}
	Due to the exchangeability of the particles, we can estimate 
	\begin{align*}
	(F^N(\overline X_t))_1-(\overline F^N(\overline X_t))_1=\frac{1}{N}\sum_{j=2}^Nf_N^{\beta}(\overline q_1^t-\overline q_j^t)-\int_{\R^3} f_N^{\beta}(\overline q_1^t-q)k^N(q,t)dq=\frac{1}{N}\sum_{j=2}^{N}Z_j,
	\end{align*}
for the random variable $$Z_j:=f_N^{\beta}(\overline q_1^t-\overline q_j^t)-\int_{\R^3} f_N^{\beta}(\overline q_1^t-q)k^N(q,t)dq.$$ 
	Since $\overline q_1^t$ and $\overline q_j^t$ are independent when $j\neq 1$ and $f_N^{\beta}(0)=0$, let us consider $\overline q_1^t$ as given and denote $\mathbb{E'}[\cdot]=\mathbb{E}[\cdot|\overline q_1^t]$ and the condition 
	$\mathbb{E}'[Z_j]=0$ of Lemma \ref{Wahrscheinlichkeitslemma} holds since
	\begin{align*}
	\mathbb{E}'\left[f_N^{\beta}(\overline q_1^t-\overline q_j^t)\right]=\int_{\R^6} f_N^{\beta}(\overline q_1^t-q)k^N(q,p,t)dxdp\notag 
	=\int_{\R^3} k^N(\overline q_1^t-q)k^N(q,t)dq.
	\end{align*}
	Additionally we need a bound for the variance as well
	\begin{align*}
	\mathbb{E}'\big[|Z_j|^2\big]=\mathbb{E}'\left[\left|k^N(\overline q_1^t-\overline q_j^t)-\int_{\R^3} k^N(\overline q_1^t-q)\rho^N(q,t)dq\right|^2\right].
	\end{align*}
	We know that
	\begin{align*}
\int_{\R^3} f_N^{\beta}(\overline q_1^t-q)k^N(q,t)dq\leq C(\norm{k^N}_1+\norm{k^N}_\infty),
	\end{align*}
which suffices to estimate
	\begin{align*}
	\mathbb{E'}\big[f_N^{\beta}(\overline q_1^t-\overline q_j^t)\big]=\int_{\R^3} f_N^{\beta}(\overline q_1^t-q)k^N(q,t)dq\leq C(\norm{k^N}_1+\norm{k^N}_\infty)N^{\beta}\leq C,
	\end{align*}
	and additionally the variance
	\begin{align*}
	\mathbb{E'}\big[f_N^{\beta}(\overline q_1^t-\overline q_j^t)^2\big]=\int_{\R^3} f_N^{\beta}(\overline q_1^t-q)^2k^N(q,t)dq\leq \norm{k^N}_\infty\norm{k^N}_2^2N^{5\beta}\leq CN^{5\beta}.
	\end{align*}
	
	Hence one has
	\begin{align*}
	\mathbb{E}'\big[|Z_j|^2\big]\leq CN^{5\beta}.
	\end{align*}
It follows, for $r(N)=CN^{5\beta}$, that $|Z_j|\leq C\sqrt{Nr(N)}$. Hence, using Lemma \ref{Wahrscheinlichkeitslemma}, we have the probability bound 
	\begin{align*}
	\mathbb{P}\left(\left|(F^N(\overline X_t))_1-(\overline F^N(\overline X_t))_1\right|\geq C N^{5\beta-1}\ln(N)\right)\leq N^{-\gamma}.
	\end{align*}
	Similarly, the same bound  also  holds for all other indexes  $i=2,\cdots,N$, which leads to
	\begin{align}\label{F all}
	\mathbb{P}\left(\left\| F^N(\overline X_t)-\overline F^N(\overline X_t)\right\|_\infty\geq C N^{5\beta-1}\ln(N)\right)\leq N^{1-\gamma}.
	\end{align}
	Let $C_{\gamma}$ be the constant $C$ in  \eqref{F all} which is only depending on $\gamma,T$ and $k_0$, then we conclude \eqref{F prob bound}.
	
	To prove \eqref{G prob bound}, we follow the same procedure as above
	\begin{align*}
	(G^N(\overline X_t))_1-(\overline G^N(\overline X_t))_1=\frac{1}{N}\sum_{j=2}^Ng_N^{\beta}(\overline q_1^t-\overline q_j^t)-\int_{\R^3} g_N^{\beta}(\overline q_1^t-q)k^N(q,t)dq=\frac{1}{N}\sum_{j=2}^{N}Z_j,
	\end{align*}
	with the random variable $$Z_j=g_N^{\beta}(\overline q_1^t-\overline q_j^t)-\int_{\R^3} g_N^{\beta}(\overline q_1^t-q)k^N(q,t)dq.$$ 
It holds that $\mathbb{E}'[Z_j]=0$ and
	for the bound for the variance we compute that
	\begin{align*}
	\mathbb{E'}\big[g_N^{\beta}(\overline q_1^t-\overline q_j^t)\big]=\int_{\R^3} g_N^{\beta}(\overline q_1^t-q)k^N(q,t)dq\leq N^{2\beta} C(\norm{k}_1+\norm{k}_\infty)\leq C,
	\end{align*}
	and
	\begin{align*}
	\mathbb{E'}\big[g^N(\overline q_1^t-\overline q_j^t)^2\big]=\int_{\R^3} g_N^{\beta}(\overline q_1^t-q)^2k^N(q,t)dq\leq CN^{7\beta}(\norm{k}_1+\norm{k}_\infty)\leq CN^{7\beta},
	\end{align*}
	where we have used the definition of $g_N^{\beta}$ and the fact that by integration by substitution we get the rescaling factor $N^{-3\beta}$ in dimension in $d=3$.
	Hence one has
	\begin{align*}
	\mathbb{E}'\big[|Z_j|^2\big]\leq CN^{7\beta}.
	\end{align*}
	
	So the hypotheses of Lemma \ref{Wahrscheinlichkeitslemma} are satisfied with $r(N)=CN^{7\beta}$ and additional we can estimate $|Z_j|\leq C\sqrt{Nr(N)}$. Hence, we have the probability bound 
	\begin{align*}
	\mathbb{P}\left(\left|(G^N(\overline q_t))_1-(\overline G^N(\overline q_t))_1\right|\geq C N^{7\beta-1}\ln(N)\right)\leq N^{-\gamma},
	\end{align*}
	by Lemma \ref{Wahrscheinlichkeitslemma}, which leads to
	\begin{align}\label{G all}
	\mathbb{P}\left(\left\| G^N(\overline X_t)-\overline G^N(\overline X_t)\right\|_\infty\geq C N^{7\beta-1}\ln(N)\right)\leq N^{1-\gamma}.
	\end{align}
Thus, \eqref{G prob bound} follows from \eqref{G all}.	
\end{proof}
So far we could show, that the probability  of $X$ being in one of the unlikely set defined in \ref{definition ABC} decreases faster than any negative power of $N$.
For any time $0<t<T$, initial conditions in $(B_t\cap C_t)^c$ are typical with respect to the product measure $K_0 := \otimes^{N}
k_0$ on $\R^{6N}.$
\subsubsection{Controlling the Expectation value of $J_t$}
We are left to estimate the expectation $E_0(J^{N}_t)$ and remember that it was split into
	\begin{equation*} E_0(J^{N}_t) = \E_0(J^{N}_t \mid \mathcal{A}_t) + \E_0(J^{N}_t \mid  \mathcal{A}_t^c\setminus(\mathcal{B}_t^c\cap C_t^c) + \E_0(J^{N}_t \mid (\mathcal{A}_t\cup\mathcal{B}_t\cup\mathcal{C}_t)^c. \end{equation*}
 As we already know that, on the set $\mathcal{A}_t$ the process $J^{N}_t(X)$ is already maximal and we have $\frac{\mathrm{d}}{\mathrm{d}t} J^{N}_t(X) =0 $ and thus also
$ \frac{\mathrm{d}}{\mathrm{d}t}\, \E_t(J^{N}_t \mid \mathcal{A}_t) =0$.
 To estimate the remaining terms we remember that for $X \in \mathcal{A}_{t}^c$ the probability for $X \in  \mathcal{B}_t \cap \mathcal{C}_t$ decreases faster than any power of $N$. 
Further more right derivative of $J_t$ with respect to $t$ is given by
\begin{align*}
	&\partial_t^+ J_t^{N, \lambda}(X)\\
	 &\leq \max \Bigl\lbrace 0 , \frac{d}{dt}\left(\sigma_{N,t}\left(N^{\alpha} \sqrt{\ln(N)} \left| ^N\Psi^1_{t,0}(X) - \Phi^{1,N}_{t,0}(X) \right|_\infty
	 +N^\alpha \left| \Psi^{2,N}_{t,0}(X) - \Phi^{2,N}_{t,0}(X)\right|_\infty+N^{5\beta+\alpha -1}
	\right) \right)  \Bigr \rbrace\\
	&\leq\max \Bigl\lbrace 0 , -\lambda \sqrt{\ln(N)} e^{\lambda \sqrt{\ln(N)}(T-t)}(N^{\alpha} \sqrt{\ln(N)} \lvert \ ^N\Psi^1_{t,0}(X) - {}^N\Phi^1_{t,0}(X) \rvert_\infty\\&\  \ \hspace{5,5cm} +\lvert ^N\Psi^2_{t,0}(X) - {}^N\Phi^2_{t,0}(X) \rvert_\infty+N^{5\beta+\alpha -1})\\
	&\ \ \ \ \ \ \ \ \ \ \ \ \ \ +  e^{\lambda \sqrt{\ln(N)}(T-t)} N^{\alpha} \partial_t \left(\sqrt{\ln(N)} \lvert \ ^N\Psi^1_{t,0}(X) - {}^N\Phi^1_{t,0}(X) \rvert_\infty+\lvert ^N\Psi^2_{t,0}(X) - {}^N\Phi^2_{t,0}(X) \rvert_\infty\right)
	 \Bigr \rbrace	
	\end{align*}
	For the derivative of the position coordinate and for the momentum coordinate we further estimate
\begin{align*}
\partial_t \lvert {}^N\Psi^1 _{t,0}(X) - \Phi^{1,N}_{t,0}(X) \rvert_\infty &\leq \lvert \partial_t (\Psi^{1,N}_{t,0}(X) - \Phi^{1,N}_{t,0}(X)) \rvert_\infty\\\notag
&\leq \lvert {}^N\Psi^2 _{t,0}(X) - \Phi^{2,N}_{t,0}(X) \rvert_\infty \leq \sup\limits_{0 \leq s \leq t} \lvert {}^N\Psi^2 _{s,0}(X) - {}^N\Phi^2_{s,0}(X) \rvert_\infty \\
\partial_t \lvert {}^N\Psi^2 _{t,0}(X) - \Phi^{2,N}_{t,0}(X) \rvert_\infty &\leq  \lvert \partial_t ({}^N\Psi^2 _{t,0}(X) - \Phi^{2,N}_{t,0}(X) )\rvert_\infty
\leq  \lvert F(\Psi^1_{t,0}(X)) - \overline{F}_t(\Phi^1_{t,0}(X)) \rvert_\infty.
	\end{align*} 	
	
	Secondly the total force is bounded $|F(X)|_{\infty}\leq N^{4\beta}$ and the mean-field force $\bar{F}$ is of order one. Since $X\in \mathcal{A}_t^c$ we get $N^{\beta}|^N\Psi^2 _{t,0}(X) - \Phi^{2,N}_{t,0}(X) |\leq 1$ and
hence $\sup\lbrace \lvert \partial^+_t J^{N}_t(X) \rvert : X \in \mathcal{A}_{t}^{c} \rbrace \leq C e^{\lambda \sqrt{\ln(N)}T} N^{4\beta}$ for some $C >0$.
According to Lemma \ref{Wahrscheinlichkeitslemma}
the probability for $X\in \mathcal{B}_t\cap\mathcal{C}_t$	decreases faster than any power of $N$.
 Hence, we can find for any $\gamma > 0$ a constant $C_\gamma$, such that
\begin{align*}
\begin{split}\label{controlatypical} \partial_t^+ \mathbb{E}(J^{N}_t \mid \mathcal{A}_{t}^{c}\setminus(\mathcal{B}_{t}^{c}\cap\mathcal{C}_{t}^{c}))\leq \sup\lbrace \lvert \partial^+_t J^{N}_t(X) \rvert : X \in \mathcal{A}_{t}^{c} \rbrace\, \mathbb{P}\bigl[(\mathcal{A}_{t}\cup \mathcal{B}_{t})\bigr]\leq e^{\lambda \sqrt{\ln(N)}T} C_\gamma N^{-\gamma}. \end{split}
\end{align*}
It remains to control $E_0(J^{N}_t \mid (\mathcal{A}_t\cup\mathcal{B}_t\cup\mathcal{C}_t)^c)$ which is defined on the most likely initial conditions.
The relevant term can be estimated by
\begin{align*}\lvert F^N(\Psi^1_{t,0}(X)) - \overline{F}_t(\Phi^1_{t,0}(X)) \rvert_\infty \leq \lvert F^N(\Psi^1_{t,0}(X)) - F^N(\Phi^1_{t,0}(X)) \rvert_\infty +  \lvert F^N(\Phi^1_{t,0}(X)) - \overline{F}_t(\Phi^1_{t,0}(X)) \rvert_\infty. 
\end{align*}

 Since $X \notin \mathcal{B}_t$, it follows for the second addend 
	\begin{align*}\lvert F(\Phi^1_{t,0}(X)) - \overline{F}(\Phi^1_{t,0}(X)) \rvert_\infty < N^{5\beta-1}\ln(N).
	\end{align*} 
For the first addend we use the triangle inequality to get for any $1 \leq i \leq N$ 
	\begin{align*} \Bigl \lvert \bigl(F(\Psi^1_{t,0}(X)) - F(\Phi^1_{t,0}(X)) \bigr)_i \Bigr\rvert_\infty \leq \Bigl \lvert \frac{1}{N}\sum\limits_{j=1}^N f_N^{\beta}(\Psi^1_i - \Psi^1_j) - f_N^{\beta}(\Phi^1_i - \Phi^1_j) \Bigr\rvert_\infty \\ \notag
	\leq \frac{1}{N} \sum\limits_{j=1}^N \bigl \lvert f_N^{\beta}(\Psi^1_i - \Psi^1_j) - f_N^{\beta}(\Phi^1_i - \Phi^1_j) \bigr\rvert_\infty
	\end{align*}
and application of a version of mean value theorem stated in Lemma \ref{lemma Definition G} leads to
	\begin{align*}
	\bigl \lvert f_N^{\beta}(\Psi^1_i - \Psi^1_j) - f_N^{\beta}(\Phi^1_i - \Phi^1_j) \bigr\rvert_\infty &\leq g_N^{\beta}(\Phi^1_i - \Phi^1_j) \lvert (\Psi^1_i - \Psi^1_j) - (\Phi^1_i - \Phi^1_j) \rvert_\infty\\[1.2ex]
	&\leq 2\, g_N^{\beta}(\Phi^1_i - \Phi^1_j) \lvert \Psi^1_{t,0} - \Phi^1_{t,0} \rvert_\infty.
	\end{align*}
Since $X \in \mathcal{A}^c_t$ we have, by the construction of $J^{N}_t(X)$, in particular for $N$ large enough 
$$\sup\limits_{0 \leq s \leq t} \lvert ^N\Psi^1_{s,0}(X) - {}^N\Phi^1_{s,0}(X) \rvert_\infty< N^{-\alpha}.$$
Additionally $X \notin \mathcal{C}_t$ from which we can conclude that $|G(\Phi_{t,0}^{N}(X))-\bar{G}(\Phi_{t,0}^{N}(X))|_{\infty}\leq CN^{7\beta-1}\ln(N)$ and in particular
	\begin{align*} 
	\frac{1}{N} \sum\limits_{j=1}^N g_N^{\beta} (\Phi^1_i - \Phi^1_j) &= \bigl(G^N(\Phi_{t,0}(X) \bigr)_i 
	\leq \lVert g_N^{\beta}*\tilde{k}^N_t(q) \rVert_\infty + N^{7\beta-1}\ln(N)\leq C N^{7\beta-1}\ln(N). 
	\end{align*}
	
For the derivative of the momentum we can conclude
	\begin{align*}
	 \frac{\mathrm{d}}{\mathrm{d}t}  \lvert \Psi^2 _t(X) - \Phi^2_{t,0}(X) \rvert_\infty \leq C N^{7\beta-1}\ln(N)\,  \bigl\lvert \Psi^1_{t,0}(X) - \Phi^1_{t,0}(X) \bigr\rvert_\infty + N^{5\beta-1}\ln(N). \end{align*}
	
We observe that for $X\in (A_t\cup B_t\cup C_t)^c$ and $\beta<\frac{1}{7}$
	\begin{align*} &\partial_t^+ \left(\sqrt{\ln(N)} \lvert \Psi^1_{t,0}(X) - \Phi^1_{t,0}(X) \rvert_\infty + \lvert \Psi^2_{t,0}(X) - \Phi^2_{t,0}(X) \rvert_\infty\right)\Bigl\lvert_{\mathcal{A}_t \cap \mathcal{B}_t \cap \mathcal{C}_t}\\
	&\leq\,\sqrt{\ln(N)} \frac{\mathrm{d}}{\mathrm{d}t}  \lvert \Psi^1_{t,0}(X) - \Phi^1_{t,0}(X) \rvert_\infty +   \frac{\mathrm{d}}{\mathrm{d}t}  \lvert \Psi^2_{t,0}(X) - \Phi^2_{t,0}(X) \rvert_\infty\\
	&\leq\,  \sqrt{\ln(N)} \lvert \Psi^2_{t,0}(X) - \Phi^2_{t,0}(X) \rvert_\infty
	 + \,  C \ln(N) \left(N^{7\beta-1} \lvert \Psi^1_{t,0}(X) - \Phi^1_{t,0}(X) \rvert_\infty + N^{5\beta-1} \right)
	\\ \leq \, &\; C\,\sqrt{\ln(N)} \left(\sqrt{\ln(N)} \lvert \Psi^1_{t,0}(X) - \Phi^1_{t,0}(X) \rvert_\infty + \lvert \Psi^2_{t,0}(X) - \Phi^2_{t,0}(X) \rvert_\infty\right) + N^{5\beta-1}.
	\end{align*}
In total the right derivative of $J_t^{N}$ is given by
	$$\partial_t^+ J_t^{N} (X) \leq \max \left\lbrace 0 , \frac{d}{dt}\left(\sigma_{N,t}\left(N^{\alpha} \left(\sqrt{\ln(N)} \lvert \Psi^1_{t,0}(X) - \Phi^1_{t,0}(X) \rvert_\infty + \lvert \Psi^2_{t,0}(X) - \Phi^2_{t,0}(X) \rvert_\infty+N^{5\beta -1}\right)
\right)	\right)  \right\rbrace$$
	with
	\begin{align*}\notag 
		&\frac{d}{dt}\left(\sigma_{N,t}\left(N^{\alpha} \left(\sqrt{\ln(N)} \lvert \Psi^1_{t,0}(X) - \Phi^1_{t,0}(X) \rvert_\infty + \lvert \Psi^2_{t,0}(X) - \Phi^2_{t,0}(X) \rvert_\infty\right)+N^{5\beta -1}
	\right) \right)\\&\leq
			 -\lambda \sqrt{\ln(N)} e^{\lambda \sqrt{\ln(N)}(T-t)}\left(N^{\alpha} \left(\sqrt{\ln(N)} \lvert \Psi^1_{t,0}(X) - \Phi^1_{t,0}(X) \rvert_\infty + \lvert \Psi^2_{t,0}(X) - \Phi^2_{t,0}(X) \rvert_\infty\right)+N^{5\beta -1}\right)\\
		&+ e^{\lambda \sqrt{\ln(N)}(T-t)}N^{\alpha}\left( C \sqrt{\log{N}} \left(\sqrt{\ln(N)} \lvert \Psi^1_{t,0}(X) - \Phi^1_{t,0}(X) \rvert_\infty + \lvert \Psi^2_{t,0}(X) - \Phi^2_{t,0}(X) \rvert_\infty\right) + N^{5\beta -1}\right)
			\\\notag &= \sqrt{\ln(N)}N^{\alpha} e^{\lambda \sqrt{\ln(N)}(T-t)}\Big[\left(C\, -\lambda \right)\left(\sqrt{\ln(N)} \lvert \Psi^1_{t,0}(X) - \Phi^1_{t,0}(X) \rvert_\infty + \lvert \Psi^2_{t,0}(X) - \Phi^2_{t,0}(X) \rvert_\infty\right)\\
			&\hspace{5cm}+ \left((\ln(N))^{-\frac{1}{2}}-\lambda \right) \, N^{5\beta -1}\Big].
	\end{align*}
	By choosing $\lambda=C$ this derivative is negative and thus we get
	$
	  \partial_t^+  \mathbb{E}(J^{N}_t \mid \mathcal{A}_{t}^{c} \cap \mathcal{B}_{t}^{c} \cap \mathcal{C}_{t}^{c}) = 0.
	  $
Finally we can conclude for the right derivative of the expectation value of the auxiliary process
	  $$ \partial_t^+ \mathbb{E}(J^{N}_t )  \leq e^{\lambda \sqrt{\ln(N)}T}C_\gamma N^{-\gamma}. $$
And thus by the linearity of the expectation value the following bound holds
\begin{align*} &\mathbb{E}_0(J^{N}_t) -  \mathbb{E}_0(J^{N}_0)=   \mathbb{E}_0\bigl(J^{N}_t-J^{N}_0 \bigr) \leq T e^{\lambda \sqrt{\ln(N)}T}C_\gamma N^{-\gamma}, 
\end{align*}
uniformly in $t\in[0,T]$.\\
The initial states where chosen such that
 $\left(\sqrt{\ln(N)} \lvert \Psi^1_{0,0}(X) - \Phi^1_{0,0}(X) \rvert_\infty + \lvert \Psi^2_{0,0}(X) - \Phi^2_{0,0}(X) \rvert_\infty\right)=0$ and thus at time $t=0$ the auxiliary process $J^{N}_0(X)\equiv e^{\lambda \sqrt{\ln(N)}T} N^{5\beta+\alpha -1}$ which 
for $N$ sufficient large and $0<\alpha<\beta<\frac{1}{7}$ is bounded by
 \begin{align*}
 e^{\lambda \sqrt{\ln(N)}T} N^{5\beta+\alpha -1}\leq  e^{\lambda \sqrt{\ln(N)}T} N^{5\beta+\alpha -1} \leq \frac{1}{2} 
 \end{align*}
as $e^{\sqrt{\ln(N)}}$ grows slower than than $N^{\epsilon}$ for all $\epsilon>0$. We observe that the random variable $J^{N}_t - J^{N}_0$ is certainly non-negative and it follows that	  
	\begin{align*}\mathbb{P}_0 \Bigl[J^{N}_T(X)-J^{N}_0(X) \geq \frac{1}{2} \Bigr] \leq 2T e^{\lambda \sqrt{\ln(N)}T} C_\gamma N^{-\gamma}. \end{align*}
	In the case $J^{N}_t - J^{N}_0<\frac{1}{2}$ we have $J^{N}_t(X)<1$ and thus we can conclude
\begin{align*}
 \begin{split}&\mathbb{P}_0\Bigl[\sup\limits_{0 \leq s \leq T}\left\{  \left(\sqrt{\ln(N)} \lvert \Psi^1_{t,0}(X) - \Phi^1_{t,0}(X) \rvert_\infty + \lvert \Psi^2_{t,0}(X) - \Phi^2_{t,0}(X) \rvert_\infty\right) \right\} \geq  N^{-\alpha}   \Bigr]\\ &\leq \mathbb{P}_0\Bigl[J^{N}_T(X)-J^{N}_0(X)
 \geq \frac{1}{2} \Bigr] \\&\leq 2T e^{\lambda \sqrt{\ln(N)}T} C_\gamma N^{-\gamma} \leq 2T C_\gamma N^{1-\gamma - 5\beta-\alpha}.
\end{split}
\end{align*}

We can find for any given $\tilde{\gamma} > 0$ by choosing $\gamma := \tilde{\gamma} + 1 - 5\beta-\alpha$ a $C_{\tilde{\gamma}}$ such that 
\begin{align*}
\mathbb{P}_0\Bigl[\sup\limits_{0 \leq s \leq T}\left\{  \left(\sqrt{\ln(N)} \lvert \Psi^1_{t,0}(X) - \Phi^1_{t,0}(X) \rvert_\infty + \lvert \Psi^2_{t,0}(X) - \Phi^2_{t,0}(X) \rvert_\infty\right)\right\}
\geq N^{-\alpha}   \Bigr] \leq T C_{\tilde{\gamma}}N^{-\tilde{\gamma}}.
\end{align*}
This proves Lemma \ref{Mainlemma}.

We are left to show that the we fully approximate the non regularised system and therefore we define  $$\Delta_N(t):=\sup_{x\in\mathbb{R}^6}\sup_{0\leq s\leq T}|\varphi^1_{s,0}(X) - \varphi^\infty_{s,0}(X)|\leq 2N^{-\alpha},$$
for $f^{\infty}=\delta_0(x)$
 which will conclude the proof of Theorem \ref{maintheorem}.
\section{Proof of Theorem \ref{maintheorem}}\label{Proof Section}

Let $t\in [0,T]$ be such that still $\Delta_N(t)\le N^{-\alpha}$, then it holds for $x\in \mathbb{R}^6$ and $N\in \mathbb{N}\setminus \{1\}$ that
\begin{align*}
&\partial_t \sup_{x\in\mathbb{R}^6}|^1\varphi_{t,0}^N(x)-{^1\varphi^{\infty}_{t,0}}(x)|=\sup_{x\in\mathbb{R}^6}|^2\varphi_{t,0}^N(x)-{^2\varphi^{\infty}_{t,0}}(x)|\\
&\leq 
\sup_{q_0,p_0\in\mathbb{R}^3}|\widetilde{k}^N_t\ast f_N^{\beta}(q_t^N(q_0,p_0))-\widetilde{k}^N_t\ast f_N^{\beta}(q_t^\infty(q_0,p_0))| 
\\&+\sup_{q_0,p_0\in\mathbb{R}^3}|\widetilde{k}^N_t\ast f_N^{\beta}(q_t^\infty(q_0,p_0))-\widetilde{k}^N_t\ast f^\infty(q_t^\infty(q_0,p_0))| 
\\&+\sup_{q_0,p_0\in\mathbb{R}^3}|\widetilde{k}^N_t\ast f^\infty(q_t^\infty(q_0,p_0))-\widetilde{k}^\infty_t\ast f^\infty(q_t^\infty(q_0,p_0))|\;. 
\\
&\leq\|\widetilde{k}^N_t\ast\nabla f_N^{\beta}\|\|q_t^N-q_t^\infty\|_\infty 
+\|\widetilde{k}^N_t\ast f_N^{\beta}-\widetilde{k}^N_t\ast f^\infty\|_\infty
+\|\widetilde{k}^N_t\ast f^\infty-\widetilde{k}^\infty_t\ast f^\infty\|_\infty.
\end{align*}
The first addend is bounded by $C\Delta_N(t)$ due to Lemma \ref{Abschätzungen Faltung} and because of the restrictions on $\tilde{k}$
\begin{align*}
\left|\widetilde{k}^N_t(q)-\widetilde{k}^\infty_t(q)\right|
\leq& \int \left|k_0(q^N_t,p^N_t)-k_0(q^\infty_t,p^\infty_t)\right| d^3p_0
\\\leq&\int \left(\sup_{h\in\mathbb{R}^3;|h|=1}\nabla k_0(q^\infty_t,p^\infty_t+h)\right)|x^N_t-x^\infty_t| d^3p_0 
\;.
\end{align*}
for a time $t$ such that $\Delta_N(t)\leq 1$.
The second one is also bounded due to Lemma \ref{Abschätzungen Faltung} by 
\begin{align*}
\|\widetilde{k}^N_t\ast f_N^{\beta}-\widetilde{k}^N_t\ast f^\infty\|_\infty\leq\|\widetilde{k}^N_t\ast (f_N^{\beta}- f^\infty)\|_\infty\leq\|\widetilde{k}^N_t\ast f_N^{\beta}\|_\infty\leq C\Delta_N(t)
\end{align*}
So we get by Gronwalls lemma
$$\sup_{q_0,p_0\in\mathbb{R}^3}|x_s^N(q_0,p_0)-x_s^\infty(q_0,p_0)|\leq C N^{-\alpha}\;,$$
which implies \begin{align*}
\left\|\Phi^N_{s,0}-\Phi^\infty_{s,0}\right\|_\infty<CN^{-\alpha}\;.
\end{align*}
That shows that the initial assumption $\Delta_N(t)\le N^{-\alpha}$ stays true for times $t<T$ provided that $N\in \mathbb{
N}$ is large enough.\\

\section{Molecular chaos}

This result implies molecular chaos in the sense of Corollary \ref{Chaoscor} which is stated below. Therefore we introduce the following notation of distance which we require to estimate the dissimilarity between the two probability measures.

\begin{definition}\label{Def_W}
	Let $\mathcal{P}(\R^n)$ be the set of probability measures on $\R^n$.For $\mu, \nu \in \mathcal{P}(\R^n)$, let $\Pi(\mu, \nu)$ be the set of all probability measures $\R^n \times \R^n$ with marginal $\mu$ and $\nu$. Then, for $p \in [1, +\infty)$, the p'th Wasserstein distance on $\mathcal{P}(\R^n)$ is defined by
	\begin{align*}
	W_p(\mu, \nu) := \inf\limits_{\pi \in \Pi(\mu,\nu)} \, \Bigl( \int\limits_{\R^n\times\R^n} \lvert x - y \rvert^p \, \dd \pi(x,y) \, \Bigr)^{1/p}.
	\end{align*}
For $p=\infty$  the infinite Wasserstein distance is defined by
\begin{align*}
		 W_\infty(\mu, \nu) = \inf \lbrace \pi- \mathrm{esssup}\,  \lvert x - y \rvert \, : \, \pi \in \Pi(\mu,\nu)\rbrace.
		 \end{align*}		
\end{definition}
In particular this notion of distance implies weak convergence in $\mathcal{P}(\R^n)$.
Theorem \ref{maintheorem} implies molecular chaos in the following sense:
\begin{corollary}\label{Chaoscor}
Let $K^N_0 := \otimes^N k_0$ be the $n-$fold solution of the considered effective equation and $K^N_t := \Psi_{t,0}^{N} \# K_0$ the $N$-particle distribution at time $t\in\lbrack0,T\rbrack$ evolving with the microscopic flow \eqref{Def:Mean-fieldflow}. Then the $i$-particle marginal
			\begin{align*} 
			^{(i)}K^N_t (x_1, ..., x_i) := \int K^N_t(X) \, \mathrm{d}^6x_{i+1}...\mathrm{d}^6x_N
			\end{align*}
			converges weakly to $\otimes^i k_t$
			{as} $N \to \infty$ for all $k \in  \N$, where $k_t$ is the unique solution of the Vlasov-Dirac-Benney equation \eqref{Vlasov} with $k^N\lvert_{t=0} = k_0$.  More precisely, under the assumptions of the previous theorem, we get  a constant $C >0$ such that for all $N \geq N_0$
			\begin{align*} 
			W_1(^{(i)}K_t^N, \otimes^i k_t) \leq i \, e^{TC\sqrt{\ln(N)}}N^{-\alpha},\, \forall 0 \leq t \leq T.
			\end{align*}
			 
\end{corollary}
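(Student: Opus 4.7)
\textbf{Proof proposal for Corollary \ref{Chaoscor}.} The plan is to couple the microscopic $N$-particle distribution with the product measure obtained by pushing forward the same initial data through the mean-field flow, so that Theorem \ref{maintheorem} can be invoked coordinate-by-coordinate. Since $K_0^N = \otimes^N k_0$ is a product measure and the lifted mean-field flow $\Phi_{t,0}^{N} = (\varphi_{t,0}^{N})^{\otimes N}$ acts independently on each of the $N$ factors, the pushforward satisfies
\begin{equation*}
\Phi_{t,0}^N \# K_0^N \; = \; \otimes^N k_t^N,
\end{equation*}
where $k_t^N = k_0 \circ \varphi_{0,t}^N$ is the solution of the regularized mean-field equation. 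Consequently, taking the $i$-particle marginal commutes with the product, and $^{(i)}(\Phi_{t,0}^N \# K_0^N) = \otimes^i k_t^N$.

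Next I would use the joint distribution of $(\Psi_{t,0}^N(X), \Phi_{t,0}^N(X))$ under $X \sim K_0^N$ as an explicit coupling between $^{(i)}K_t^N$ and $\otimes^i k_t^N$. By the definition of $W_1$ and linearity of the projection onto the first $i$ coordinates,
\begin{equation*}
W_1\!\left(^{(i)}K_t^N, \; \otimes^i k_t^N\right) \; \leq \; \mathbb{E}_0\!\left[\sum_{j=1}^{i} \left| \Psi_{t,0}^{N}(X)_j - \Phi_{t,0}^{N}(X)_j\right| \right] \; \leq \; i \, \mathbb{E}_0\!\left[ \left\| \Psi_{t,0}^{N}(X) - \Phi_{t,0}^{N}(X)\right\|_\infty \right].
\end{equation*}
I would then split the expectation according to the event $\mathcal{G} := \{\sup_{0\leq s\leq T}|\Psi_{s,0}^N(X) - \Phi_{s,0}^N(X)|_\infty \leq N^{-\alpha}\}$. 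On $\mathcal{G}$ the integrand is at most $N^{-\alpha}$, while on $\mathcal{G}^c$ it is controlled by a crude deterministic bound: the forces satisfy $|F|_\infty, |\bar F|_\infty \lesssim N^{4\beta}$, so trajectories drift apart by at most $C\, T\, N^{4\beta}$ on $[0,T]$. Theorem \ref{maintheorem} gives $\mathbb{P}(\mathcal{G}^c) \leq C_\gamma N^{-\gamma}$ for any $\gamma>0$, so the contribution of the atypical set is negligible upon choosing $\gamma$ large enough.

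Finally, to pass from the regularized marginal $\otimes^i k_t^N$ to the true target $\otimes^i k_t$, I would invoke the estimate $\Delta_N(t) \leq C N^{-\alpha}$ established in Section \ref{Proof Section}, which controls the one-particle characteristic flow of the regularized system against the one of the delta-interaction limit. The triangle inequality
\begin{equation*}
W_1\!\left(^{(i)}K_t^N, \otimes^i k_t\right) \; \leq \; W_1\!\left(^{(i)}K_t^N, \otimes^i k_t^N\right) \; + \; W_1\!\left(\otimes^i k_t^N, \otimes^i k_t\right),
\end{equation*}
combined with subadditivity of $W_1$ on tensor products (the second term is at most $i \cdot W_1(k_t^N, k_t) \leq i \cdot \Delta_N(t)$ via the common initial distribution $k_0$), yields the advertised bound $i\, e^{TC\sqrt{\ln(N)}} N^{-\alpha}$, where the exponential factor tracks the scaling constant $\sigma_{N,t}$ inherited from Lemma \ref{Mainlemma}. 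The main obstacle I anticipate is handling the atypical set $\mathcal{G}^c$ cleanly: one must trade the superpolynomial smallness of $\mathbb{P}(\mathcal{G}^c)$ against the a priori polynomial blow-up of $|\Psi - \Phi|_\infty$ in $N$, and check that the resulting rate is still absorbed into the factor $e^{TC\sqrt{\ln(N)}}N^{-\alpha}$.
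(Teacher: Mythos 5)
Your proposal is correct and follows essentially the same route as the paper: the paper bounds $W_1$ via the Kantorovich dual (Lipschitz test functions) rather than your explicit coupling, but both reduce to $\mathbb{E}_0\bigl[\lvert\Psi_{t,0}^N-\Phi_{t,0}^N\rvert\bigr]$, split over the typical set (bound $N^{-\alpha}$) and the atypical set (crude force bound $N^{4\beta}$ traded against the superpolynomial smallness from Theorem \ref{maintheorem}), and conclude with the comparison of $k_t^N$ to $k_t$. The two formulations of $W_1$ are equivalent, so the substantive estimates coincide.
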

\begin{proof} 
			
	 For a fixed time $0 \leq t \leq T$ and $\mathcal{A} \subset \R^{6N}$ defined in Definition \ref{definition ABC} we have proven in Theorem \ref{maintheorem}, that $\mathbb{P}_0(\mathcal{A}) \leq TC_\gamma N^{-\gamma}$ for sufficiently large $N$. 
	Using the notion of distance above an that all test-functions are Lipschitz with $\lVert h \rVert_{Lip} = 1$,
			\begin{align*}\notag W_1({}^{(i)}K_t^N, &\otimes^i k_t)\\
			=& \sup\limits_{\lVert h \rVert_{Lip}=1}\, \Bigl\lvert\int \bigl(K_t^N (X) -   \otimes^N k_t (X) \bigr) g(x_1,...,x_i) \mathrm{d}^6x_1...\dd^6{x_k}...\mathrm{d}^6x_N\Bigr\rvert \\
			= &\sup\limits_{\lVert h \rVert_{Lip}=1}\, \Bigl\lvert \int \bigl(\Psi_{t,0}\#K^N_0 (X) -  \Phi_{t,0}\#K^N_0(X)\bigr) g(x_1,...,x_i) \,\mathrm{d}^{6N}X\Bigr\rvert\\
			= &\sup\limits_{\lVert h \rVert_{Lip}=1}\, \Bigl\lvert \int K^N_0(X) \bigl(h(\pi_i \Psi_{t,0}(X)) - h(\pi_i\Phi_{t,0}(X))\bigr) \,\mathrm{d}^{6N}X\Bigr\rvert\\
			=&\sup\limits_{\lVert h \rVert_{Lip}=1}\, \Bigl\lvert \int\limits_{\mathcal{A}} K^N_0(X) \bigl(h(\pi_i \Psi_{t,0}(X)) - h(\pi_i \Phi_{t,0}(X))\bigr) \,\mathrm{d}^{6N}X\Bigr\rvert \\
			+& \sup\limits_{\lVert h \rVert_{Lip}=1}\, \Bigl\lvert \int\limits_{\mathcal{A}^c} K^N_0(X) \bigl(h(\pi_i \Psi_{t,0}(X)) - h(\pi_i \Phi_{t,0}(X))\bigr) \,\mathrm{d}^{6N}X\Bigr\rvert
			\end{align*}
for the projection $\pi_i: \R^{N} \to \R^i, (x_1,...,x_N) \mapsto (x_1,...,x_i)$.		
The first addend is bounded by
$$\sup\limits_{\lVert h \rVert_{Lip}=1}\, \Bigl\lvert \int\limits_{\mathcal{A}} K^N_0(X) (h\bigl(\pi_i \Psi_{t,0}(X)) - h(\pi_i \Phi_{t,0}(X))\bigr) \,\mathrm{d}^{6N}X\Bigr\rvert\leq  \mathbb{P}(\mathcal{A}^c) \lVert K^N_0\rVert_\infty \lvert \Psi_{t,0}^{N}(X) - \Phi_{t,0}^{N}(X) \rvert,$$
 with $\lVert K^N_0\rVert_\infty = (\lVert k_0 \rVert_\infty)^N$. By the initialization of the initial conditions we trivially have that  $\lvert  {}^N\Psi_{0,0}(X) - {}^N\Phi_{0,0}(X) \rvert_\infty = 0$ and by Newtons law
			\begin{align*}
			\lvert \Psi^{2,N}_{t,0}(X) - \Phi^{2,N}_{t,0}(X) \rvert_\infty &\leq \int\limits_{0}^t \lvert F^N(\Psi^1_{s,0}(X)) - \overline{F}(\Phi^1_{s,0}(X)) \rvert_\infty \dd s,\\
				\lvert \Psi^{1,N}_{t,0}(X) - \Phi^{1,N}_{t,0}(X) \rvert_\infty &\leq \int\limits_{0}^t  \lvert  \Psi^{2,N}_{s,0}(X) - \Psi^{2,N}_{s,0}(X) \rvert_\infty \dd s.
			\end{align*}
	 The mean-field force $\overline{F}$ is of order 1 and the microscopic force $F^N$ is bounded by $N^{4\beta}$. Hence, there exists a constant $C>0$ such that $\lvert \Psi^{2,N}_{t,0}(X) - \Phi^{2,N}_{t,0}(C) \rvert_\infty \leq T CN^{4\beta}$ and consequently  to Newtons law $\lvert \Psi^{1,N}_{t,0}(X) - \Phi^{1,N}_{t,0}(X) \rvert_\infty \leq T^2CN^{4\beta}$ for all $t \leq T$. Choosing $\gamma := 5\beta$ in Theorem \ref{maintheorem} we thus get $C$ such that
			\begin{align*}\mathbb{P}(\mathcal{A}^c) \lVert K^N_0\rVert_\infty \lvert \Psi_{t,0}^{N}(X) - \Phi_{t,0}^{N}(X) \rvert \leq C \max \lbrace T^2, T^3 \rbrace N^{-\beta},
			\end{align*}
			for all times $ 0 \leq t \leq T$.
			 On the other hand, for $X \in \mathcal{A}^c$, we have for any $h$ with $\lVert h \rVert_{Lip} = 1$, 
				\begin{equation*} 
			   \lvert h(\pi_i \Psi_{t,0}^{N}(X)) - h(\pi_i \Phi_{t,0}^{N}(X)) \rvert \leq  \lvert  \Psi_{t,0}^{N}(X) - \Phi_{t,0}^{N}(X) \rvert_\infty \leq N^{-\alpha}
				\end{equation*}
			 for all $t \leq T$ and thus  $$\sup\limits_{\lVert g \rVert_{Lip}=1}\, \Bigl\lvert \int\limits_{\mathcal{A}^c} K^N_0(X) \bigl(h(\pi_i \Psi_{t,0}(X)) - h(\pi_i \Phi_{t,0}(X))\bigr) \,\mathrm{d}^{6N}X\Bigr\rvert\leq N^{-\alpha}.$$ It follows that there exists  a constant $C$ such that
				\begin{align*}
				W_1(^{(i)}K_t^N, \otimes^i k^N_t) \leq C(1+T^3)N^{-\alpha},
				\end{align*}
for all times $ 0 \leq t \leq T$ and due to the property that $k^N$ approximate $k$ (see \cite[Prob. 9.1]{Dustin} the statement follows. 
		\end{proof}
		Molecular chaos in the sense of Corollary \ref{Chaoscor} implies convergence in law of the empirical distribution to the solution of the Vlasov Dirac Benney equation  $k_t$ (see e.g.  \cite{Kac}, \cite{Grunbaum}, \cite[Prop.2.2]{Sznitman}).
Finally one can derive the macroscopic mean-field equation \eqref{Vlasov} from the microscopic random particle system \ref{Def:Newtonflow}. We define the empirical measure associated to the microscopic $N$-particle systems and respectively to the macroscopic by
\begin{align*}
\mu_\Phi(t):=\frac{1}{N}\sum_{i=1}^{N}\delta(q-q_i^t)\delta(p-p_i^t),\quad \mu_\Psi(t):=\frac{1}{N}\sum_{i=1}^{N}\delta(q-\overline q_i^t)\delta(p-\overline p_i^t)
\end{align*}
and will see that the empirical measure $\mu_\Phi(t)$ converges to the solution of the Vlasov-Dirac-Benney equation in $W_p$ distance with high probability.

	\begin{theorem}[Particle approximation of the Vlasov-Dirac-Benney system]\label{Vlasov approx}
	Let $k_0$ be a probability measure satisfying the assumptions of Theorem \ref{maintheorem} ${}^N\Psi_{t,s}$ be the $N$-particle flow solving \eqref{Def:Newtonflow}.  Then, the empirical density $\mu_{\Phi_0(t)}$ converges to the solution of the Vlasov-Dirac-Benney equation in the following sense:\\
	
	\noindent For any $T>0$ there exists a constant $C$ depending on $k_0$ and $T$ such that for all $N \geq N_0$ and some $\eta,\iota>0$
	\begin{align*}
	\mathbb{P}\Bigl[ \max_{t\in\lbrack 0,T\rbrack } \in [0,T] : W_p(\mu_\Phi, k_t) >  N^{-\eta}  \Bigr] 
	\leq  C e^{-CN^{1-\iota}},
\end{align*}
	\noindent where $k$ is the unique solution of the Vlasov-Dirac-Benney system on $[0,T]$. 
\end{theorem}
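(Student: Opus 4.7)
My plan is to decompose the Wasserstein distance via the triangle inequality, inserting the empirical measure of the mean-field (i.i.d.) particles as an intermediate object, and then to control the two resulting pieces by essentially orthogonal arguments: the trajectory comparison of Theorem \ref{maintheorem} for the first piece, and a classical concentration inequality for empirical measures of i.i.d.\ samples (in the spirit of Fournier--Guillin) for the second.

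Concretely, I would introduce the empirical measure of the mean-field trajectories
\begin{align*}
\mu_\Psi(t) := \frac{1}{N}\sum_{i=1}^N \delta(q-\overline{q}_i^t)\delta(p-\overline{p}_i^t),
\end{align*}
and bound
\begin{align*}
W_p\bigl(\mu_\Phi(t), k_t\bigr) \leq W_p\bigl(\mu_\Phi(t), \mu_\Psi(t)\bigr) + W_p\bigl(\mu_\Psi(t), k_t\bigr).
\end{align*}
For the first term, the diagonal coupling which pairs $(q_i^t, p_i^t)$ with $(\overline{q}_i^t,\overline{p}_i^t)$ gives the deterministic estimate
\begin{align*}
W_p\bigl(\mu_\Phi(t), \mu_\Psi(t)\bigr)^p \leq \frac{1}{N}\sum_{i=1}^N |x_i^t - \overline{x}_i^t|^p \leq \sup_i |x_i^t - \overline{x}_i^t|^p,
\end{align*}
and Theorem \ref{maintheorem} says the right-hand side is $\leq N^{-\alpha p}$ outside an event of probability $\leq C_\gamma N^{-\gamma}$ for arbitrarily large $\gamma$. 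This already contributes $N^{-\alpha}$ to the Wasserstein distance.

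The second term is a purely statistical statement: by the measure-invariance of the mean-field flow, $\overline{x}_1^t,\dots,\overline{x}_N^t$ are i.i.d.\ with density $k_t$ at each fixed $t$, so a Fournier--Guillin-type concentration bound applies. Using the uniform-in-$t$ moment and boundedness control on $k_t$ that is propagated from $k_0$ by the Vlasov-Dirac-Benney flow (inherited from the regularity already required in Theorem \ref{maintheorem}), one obtains, in phase-space dimension $d=6$,
\begin{align*}
\mathbb{P}\bigl(W_p(\mu_\Psi(t), k_t) > \varepsilon\bigr) \leq C\exp\!\bigl(-c N \varepsilon^{q}\bigr)
\end{align*}
for an exponent $q=q(p,d)$. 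Choosing $\varepsilon = N^{-\eta}$ with $\eta$ small, this becomes $\exp(-cN^{1-q\eta})$, which yields the stated stretched-exponential rate for a suitable $\iota>0$.

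The main obstacle is upgrading the pointwise-in-$t$ statement to the uniform-in-$t$ bound $\max_{t\in[0,T]}$ claimed in the theorem. I would handle it by discretising $[0,T]$ into $N^a$ equally spaced points (for some small $a>0$), applying the pointwise concentration estimate together with a union bound at each node (which only costs a polynomial factor and so preserves the stretched-exponential rate), and controlling the oscillation of both $W_p(\mu_\Phi(t),\mu_\Psi(t))$ and $W_p(\mu_\Psi(t),k_t)$ between consecutive grid-points via the a priori force bounds $\|F\|_\infty\leq N^{4\beta}$ and $\|\overline F\|_\infty\leq C$ (so that particle positions drift at most by $N^{4\beta-a}$ in one time step). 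Balancing $a$, $\eta$ and $\gamma$ then delivers both the rate $N^{-\eta}$ in the deviation and the exponential concentration $e^{-CN^{1-\iota}}$.
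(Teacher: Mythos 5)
Your decomposition is essentially the paper's, but with one term missing, and the missing term hides a genuine gap. You assert that the mean-field particles $\overline{x}_1^t,\dots,\overline{x}_N^t$ are i.i.d.\ with density $k_t$, the solution of the Vlasov-Dirac-Benney equation \eqref{Vlasov}. They are not: they evolve under the flow $\varphi^N_{t,0}$ generated by the \emph{regularised} force $f_N^\beta * \tilde{k}^N_t$, so at each time $t$ they are i.i.d.\ with density $k^N_t$, the solution of the regularised mean-field equation. Your Fournier--Guillin-type concentration therefore controls $W_p(\mu_\Psi(t), k^N_t)$, not $W_p(\mu_\Psi(t), k_t)$, and a third contribution $W_p(k^N_t, k_t)$ must be added. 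This is exactly how the paper proceeds: it splits $W_p(\mu_\Phi(t),k_t)\leq W_p(k_t,k_t^N)+W_p(k_t^N,\mu_\Psi(t))+W_p(\mu_\Psi(t),\mu_\Phi(t))$, handles the middle term by the concentration estimate (\cite[Cor.~9.4]{Dustin}) and the last by Theorem \ref{maintheorem}, and disposes of the first by the deterministic stability estimate of Section \ref{Proof Section} (equivalently \cite[Prob.~9.1]{Dustin}), which yields $\sup_{0\leq s\leq T}|\varphi^N_{s,0}(x)-\varphi^\infty_{s,0}(x)|\leq CN^{-\alpha}$ and hence $W_p(k^N_t,k_t)\leq CN^{-\alpha}$ via the coupling induced by the two one-particle flows. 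Once you insert this term your argument closes, with the same rate $N^{-\alpha}$.

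Two further remarks. First, your discretisation argument for upgrading the pointwise-in-$t$ concentration to the claimed $\max_{t\in[0,T]}$ is sound and is in fact more explicit than the paper, which leaves time-uniformity to the citation; just note that the oscillation of $\mu_\Phi(t)$ between grid points is governed by the particle momenta (positions drift like $|p|\,\delta t$), while the crude bound $\|F\|_\infty\leq N^{4\beta}$ controls the drift of the momenta. Second, be careful with the claimed probability rate: the contribution of the event controlled by Theorem \ref{maintheorem} is only $C_\gamma N^{-\gamma}$ (super-polynomial but not exponential), so the total failure probability obtained by either your argument or the paper's is of order $C_\gamma N^{-\gamma}$ rather than the stretched exponential $Ce^{-CN^{1-\iota}}$ stated in the theorem; your closing sentence claiming to ``deliver the exponential concentration'' overstates what the first term of the decomposition allows.
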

\begin{proof}
In order to prove this let us split $W_p(\mu_\Phi(t),k_t)$ into three parts
\begin{align*}
W_p(\mu_\Phi(t),k_t)&\leq W_p(k_t,k_t^N)+W_p(k_t^N,\mu_\Psi(t))+W_p(\mu_\Psi(t),\mu_\Phi(t)).
\end{align*}
The convergence of the first addend is a deterministic result stated in \cite[Prob. 9.1]{Dustin}, the second addend is bounded in probability due to\cite[Cor. 9.4]{Dustin}
 and the last term is bounded in probability due to Theorem \ref{maintheorem}.
\end{proof}

\newpage

\bibliographystyle{amsplain}

\begin{thebibliography}{10}
\bibitem{Bardos Besse}C. Bardos and N. Besse. Hamiltonian structure, fluid representation and stability for the
Vlasov-Dirac-Benney equation. In Hamiltonian partial differential equations and applications,
volume 75 of Fields Inst. Commun., pages 1–30. Fields Inst. Res. Math. Sci., Toronto, ON,
2015.
\bibitem{Bardos-Besse}
C.~Bardos and N.~Besse.
\newblock {The Cauchy problem for the Vlasov-Dirac-Benney equation and related
  issued in fluid mechanics and semi-classical limits}.
\newblock {\em Kinet. Relat. Models}, 6(4):893--917, 2013.

\bibitem{Benachour}
S.~Benachour,
\newblock Analyticité des solutions des équations de Vlasov-Poisson. 
\newblock In Annali della Scuola Normale Superiore di Pisa - Classe di Scienze, Serie 4, Volume 16 (1989) no. 1, pp. 83-104.
\bibitem{Nouri} P.~-E.~ JABIN AND A.~ NOURI, Analytic solutions to a strongly nonlinear Vlasov equation, C. R. Math. Acad.
Sci. Paris, 349 (2011), pp. 541–546.

\bibitem{Besse}
N.~Besse,F.~ Berthelin, Y.~ Brenier,P.~ Bertrand, The multi water-bag model for collisionless
kinetic equations. Kinetic and Related Models 2, nb. 1 (2009) 39-80.
\bibitem{Peter}
N.~Boers and P.~Pickl.
\newblock On mean-field limits for dynamical systems.
\newblock {\em Journal of Statistical Physics}, 164(1):1--16, 2016.

\bibitem{BraunHepp}
W.~Braun and K.~Hepp.
\newblock The {V}lasov dynamics and its fluctuations in the 1/{N} limit of
  interacting classical particles.
\newblock {\em Communications in Mathematical Physics}, 56(2):101--113, 1977.






\bibitem{Dobrushin}
R.~L.~ Dobrushin.
\newblock Vlasov equations.
\newblock {\em Functional Analysis and Its Applications}, 13(2):115--123, 1979.

\bibitem{Goodman}
J.~Goodman.
\newblock Convergence of the random vortex method.
\newblock {\em Communications on Pure and Applied Mathematics}, 40(2):189--220,
  1987.
\bibitem{grass}
P.~Gra{\ss}.
\newblock {\em Microscopic derivation of Vlasov equations with singular
  potentials}.
\newblock PhD thesis, lmu, 2019.
  \bibitem{Grunbaum}
F.~A.~ Gr{\"u}nbaum.
\newblock Propagation of chaos for the {B}oltzmann equation.
\newblock {\em Archive for Rational Mechanics and Analysis}, 42:323--345, 1971.

\bibitem{HanKwanRousset}
D.~ Han-Kwan,F.~ Rousset.
\newblock Quasineutral limit for Vlasov-Poisson with Penrose stable data.
\newblock Societe Mathematique de France, Paris, 4e serie, t. 49, 2016, p. 1445-1495

\bibitem{Hauray2013}
M.~Hauray and P.~ E.~ Jabin.
\newblock Particles approximations of {V}lasov equations with singular forces : Propagation of chaos.
\newblock To appear in \textit{Ann. Sci. Ec. Norm. Super., s{\'e}rie} 48:891--940, 2015.
\bibitem{Hauray2007}
M.~ Hauray and P.~ E.~ Jabin.
\newblock N-particles approximation of the Vlasov equations with singular potential.
\newblock {\em Arch. Ration. Mech. Anal.}, 183(3):489-â, 2007.
\bibitem{Kac}
M.~Kac.
\newblock Foundations of kinetic theory.
\newblock In {\em Proceedings of the Third Berkeley Symposium on Mathematical
  Statistics and Probability, 1954-1955}, Vol. III, pages 171--197.
  University of California Press, 1956.
  \bibitem{Kiessling}
M.~K.-H. Kiessling.
\newblock The microscopic foundations of {V}lasov theory for jellium-like
  {N}ewtonian {N}-body systems.
\newblock {\em Journal of Statistical Physics}, 155(6):1299--1328, 2014.
\bibitem{Dustin}
D.~ Lazarovici. 
\newblock The Vlasov-Poisson dynamics as the mean-field limit of extended charges. 
\newblock{ \em Communications in Mathematical Physics}, 347(1):271--289, 2016.


\bibitem{NeunzertWick}
H.~Neunzert and J.~Wick.
\newblock Die {A}pproximation der {L}\"{o}sung von {I}ntegro-{D}ifferentialgleichungen durch endliche {P}unktmengen.
\newblock In {\em Numerische Behandlung nichtlinearer {I}ntegrodifferential - und Differentialgleichungen}, volume 395 of {\em Lecture Notes in Mathematics}, pages 275--290. Springer, Berlin, Heidelberg, 1974.



\bibitem{Oelschläger Hydro}
K.~Oelschläger 
\newblock {\em On the connection between Hamiltonian many-particle systems and the hydrodynamical equations}. 
\newblock Arch. Rational Mech. Anal. 115, 297–310 (1991).
\bibitem{Oelschläger Brown}
K.~A.~Oelschläger, law of large numbers for moderately interacting diffusion processes. Z. Wahrscheinlichkeitstheorie verw Gebiete 69, 279–322 (1985).
\bibitem{Penrose}
O.~{Penrose}.
\newblock {Electrostatic instability of a uniform non-Maxwellian plasma}.
\newblock {\em Phys. Fluids}, 3:258--265, 1960.
 
\bibitem{SpohnBook}
H.~Spohn.
\newblock  Large scale dynamics of interacting particles. Springer, Berlin, Heidelberg, 1991.
  \bibitem{Sznitman}
A.~S~. Sznitman.
\newblock Topics in propagation of chaos.
\newblock In {\em {\'E}cole d'{\'E}t{\'e} de Probabilit{\'e}s de Saint-Flour
  XIX -- 1989}, volume 1464 of {\em Lecture Notes in Mathematics}, pages
  165--251. Springer, Berlin, 1991.





\bibitem{VLASOV}
A.~ A.~ Vlasov (1938) "On Vibration Properties of Electron Gas". J. Exp. Theor. Phys. (in Russian). 8 (3): 291.

  \bibitem{Zhang}
Z.~Chen, Z. and X. Zhang. 
\newblock Global existence to the Vlasov-Poisson system and propagation of moments without assumption of finite kinetic energy.


\end{thebibliography}

\end{document}